\theoremstyle{plain}
\newtheorem{theorem}{Theorem}
\newtheorem{lemma}[theorem]{Lemma}
\newtheorem{proposition}[theorem]{Proposition}
\newtheorem{definition}[theorem]{Definition}
\def\imod#1{\allowbreak\mkern10mu({\operator@font mod}\,\,#1)}
\theoremstyle{plain}
\newtheorem*{rep@theorem}{\rep@title}
\newcommand{\newreptheorem}[2]{%
\newenvironment{rep#1}[1]{%
\def\rep@title{#2 \ref{##1}}%
\begin{rep@theorem}}%
{\end{rep@theorem}}}
\newcommand{\bydef}{ \stackrel{\mathrm{def}}{=} }
\newcommand{\Tr}{{\mathcal T}}
\newcommand\node[1]{*+[o]{#1}}
\newcommand\addLabelUL[1]{\ar@{}[]+UR|(1){~\makebox[0pt][l]{$\mathbf{#1}$}}}
\newcommand\addLabelUR[1]{\ar@{}[]+UR|(1){~\makebox[0pt][l]{$\mathbf{#1}$}}}
\newcommand\addLabelDL[1]{\ar@{}[]+UR|(1){~\makebox[0pt][l]{$\mathbf{#1}$}}}
\newcommand\addLabelDR[1]{\ar@{}[]+UR|(1){~\makebox[0pt][l]{$\mathbf{#1}$}}}
\newcommand\addDMD[2]{
	\ar@{-}[]+<#1pt,0pt>;[]+<0pt,#2pt>
	\ar@{-}[]+<0pt,#2pt>;[]+<-#1pt,0pt>
	\ar@{-}[]+<-#1pt,0pt>;[]+<0pt,-#2pt>
	\ar@{-}[]+<0pt,-#2pt>;[]+<#1pt,0pt>
}
\newcommand{\trees}[1]{{\mathcal T}_{#1}}
\newcommand{\ignore}[1]{}
\newcommand{\Bb}{{\mathcal B}}
\newcommand*{\diameter}{0.2cm}
\newcommand*{\smalldiameter}{0.15cm}
\definecolor{mygreen}{rgb}{0,0.6,0}
\definecolor{mygray}{rgb}{0.5,0.5,0.5}
\definecolor{mymauve}{rgb}{0.58,0,0.82}
\tiny\color{mygray}, 
\newcounter{nalg} 
\renewcommand{\thenalg}{\arabic{nalg}} 
\title{On the Problem of Computing the Probability of Regular Sets of Trees}
\author[1]{Henryk Michalewski}
\author[2]{Matteo Mio}
\affil[1]{University of Warsaw, Poland}
\affil[2]{CNRS/ENS-Lyon, France}
\begin{document}

\maketitle              

\begin{abstract}

We consider the problem of computing the probability of regular languages of infinite trees with respect to the natural coin-flipping measure. We propose an algorithm which computes the probability of languages recognizable by \emph{game automata}. In particular this algorithm is applicable to all deterministic automata. We then use the algorithm to prove through examples three properties of measure: (1) there exist regular sets having irrational probability, (2) there exist comeager regular sets having probability $0$ and (3) the probability of \emph{game languages} $W_{i,k}$, from automata theory, is $0$ if $k$ is odd and is $1$ otherwise.

\end{abstract}

%
%

\section{Introduction}\label{sec:introduction}

Regular languages of trees are sets of infinite binary trees, labeled by letters from a finite alphabet $\Sigma$,  definable by a formula of Monadic Second Order (MSO) logic interpreted over the full binary tree \cite{thomas96} or, equivalently, specified by an \emph{alternating tree automaton} \cite{mullerschupp}. 
In this paper we consider the following problem. Suppose a $\Sigma$-labeled tree $t$ is generated by 
labeling each vertex by a randomly and uniformly chosen letter $a\!\in\!\Sigma$. For a given regular language $L$, what is the probability that $t$ belongs to $L$?
By probability we mean the standard \emph{coin--flipping probability measure} $\mu$ (see Section \ref{tech_back} for definitions) on the space of $\Sigma$-labeled trees. Hence a precise formulation of our problem is as follows.\\

{\bf Probability Problem:} does there exist an algorithm which for a given regular language of trees $L$ computes the probability $\mu(L)$?\\

A qualitative variant 
of the problem only asks for a decision procedure for the question ``is $\mu(L)\!=\!1$?''. The problem is well posed since it was recently shown in  \cite[Theorem 1]{GMMS2014}  that regular sets of trees are measurable with respect to any Borel measure and thus, in particular, with respect to the coin-flipping measure.


\subsection{Main Results}

We give a positive solution to the Probability Problem for a subclass of regular languages.

\newcommand{\compuGame}{ 
Let $L$ be a regular set of infinite trees recognizable by a game automaton. Then the probability of $L$ is computable and is an algebraic number.} 
\begin{theorem}\label{thm:compuGame} 
\compuGame
\end{theorem}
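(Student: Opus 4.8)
Recall that a game automaton is an alternating parity tree automaton $\mathcal{A}=(Q,\Sigma,q_I,\delta,\Omega)$ in which each transition $\delta(q,a)$ has one of the shapes $\top$, $\bot$, $(0,q_0)$, $(1,q_1)$, $(0,q_0)\wedge(1,q_1)$ or $(0,q_0)\vee(1,q_1)$, so that the two operands of every $\wedge$ or $\vee$ refer to the two \emph{distinct} children. Since regular sets are measurable, the numbers $x_q := \mu\bigl(L(\mathcal{A}_q)\bigr)$ --- where $\mathcal{A}_q$ denotes $\mathcal{A}$ with initial state $q$ --- are well defined and $\mu(L)=x_{q_I}$. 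The plan is: (i) derive a polynomial equation system for the vector $\vec x=(x_q)_{q\in Q}$; (ii) single out $\vec x$ among the solutions by a nested least/greatest fixed point driven by the priorities; (iii) conclude via the decidability of the first-order theory of real closed fields.

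\emph{Step (i): the equations.} A $\mu$-random tree is a uniformly random root letter together with two independent $\mu$-random subtrees $t|_0$ and $t|_1$. Conditioning on the root letter $a$ and using the shape of $\delta(q,a)$ together with the independence of $t|_0$ and $t|_1$, the conditional probability that $t\in L(\mathcal{A}_q)$ equals $x_{q_0}x_{q_1}$ in the $\wedge$-case, $x_{q_0}+x_{q_1}-x_{q_0}x_{q_1}$ in the $\vee$-case, $x_{q_0}$ or $x_{q_1}$ in the one-direction cases, and $1$ or $0$ in the $\top$/$\bot$ cases. It is precisely the restriction that the two operands address different children that lets these be written with the scalars $x_{q_0},x_{q_1}$, instead of the measure of an intersection of two languages, for which the numbers $x_q$ would not be enough. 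Averaging over $a\in\Sigma$ yields $\vec x=\Phi(\vec x)$ with $\Phi_q(\vec x)=\frac{1}{|\Sigma|}\sum_{a\in\Sigma}\phi_{q,a}(\vec x)$, each $\phi_{q,a}$ being one of the polynomials above; the operator $\Phi$ is monotone, and in fact continuous, on $[0,1]^Q$.

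\emph{Step (ii): selecting the solution.} The system $\vec x=\Phi(\vec x)$ may have several solutions in $[0,1]^Q$ --- already $x=x^2$ has the solutions $0$ and $1$ --- and which one is the measure is governed by the priority assignment. The claim is that $\vec x$ is the nested fixed point $\mathrm{Fix}(\Phi)$ obtained by taking, priority class by priority class from the highest down, a least fixed point for odd priorities and a greatest fixed point for even ones. To justify this, fix a tree $t$: because $\mathcal{A}$ is a game automaton, the acceptance game on $t$ is an ordinary parity game in which a single token moves through pairs (node of $t$, state), with Eve choosing at $\vee$-transitions and Adam at $\wedge$-transitions, and $t\in L(\mathcal{A}_q)$ iff Eve wins from the root with state $q$. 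By the standard fixed-point characterisation of the winning regions of parity games, transported to the level of sets of trees through the self-similarity of the construction, the language $L(\mathcal{A}_q)$ is exactly the $q$-component of the nested fixed point of the \emph{set-valued} analogue of $\Phi$. The set operators occurring there (a finite intersection over the letters, and the clauses $t|_0\in Z_{q_0}$ and/or $t|_1\in Z_{q_1}$) preserve countable unions and intersections, so each of the nested fixed points is reached in at most $\omega$ iterations; and $\mu$, being continuous from below and from above, commutes with all these monotone $\omega$-chains. Hence $\mu$ applied to the set-level nested fixed point gives the scalar nested fixed point, that is, $\vec x=\mathrm{Fix}(\Phi)$. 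I expect this step --- establishing that the semantics of a game automaton is this nested fixed point, and that $\mu$ may legitimately be pushed through it --- to be the main technical work of the proof.

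\emph{Step (iii): computability and algebraicity.} The tuple $\mathrm{Fix}(\Phi)$ is the unique solution of a single first-order formula over the real closed field $(\mathbb{R},+,\cdot,<,0,1)$ with coefficients in $\mathbb{Q}$: the formula stating that $\vec x$ solves $\vec x=\Phi(\vec x)$ and is, class by class from the top priority down, the least (for odd priorities) or greatest (for even priorities) such solution. Being the only point of a semialgebraic set defined over $\mathbb{Q}$, each $x_q$ --- in particular $\mu(L)=x_{q_I}$ --- is a real algebraic number, and by Tarski's theorem (with effective quantifier elimination, or cylindrical algebraic decomposition) one can compute from $\mathcal{A}$ a defining polynomial and an isolating interval for it. This yields both the computability and the algebraicity asserted in the theorem.
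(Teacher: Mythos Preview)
Your overall strategy coincides with the paper's: derive a monotone polynomial operator $\Phi$ on $[0,1]^Q$, identify $(\mu(L(\mathcal A_q)))_q$ as the nested least/greatest fixed point of $\Phi$ dictated by the priorities, and compute that fixed point via the first-order theory of real closed fields. The paper factors this through an intermediate abstraction --- it associates to the game automaton a \emph{Markov branching play} $\mathcal M$, shows that $\mu(L)=val(\mathcal M,s_{q_0})$, and then invokes a theorem from \cite{MioThesis,MIO2012b} stating that the value vector of an MBP is exactly the nested $\mu/\nu$ fixed point of the very operator you call $\Phi$ (their Proposition~\ref{bellman} and Theorem~\ref{main_theorem_MBP}). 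Your Steps~(i) and~(iii) are the direct analogues of the paper's Proposition~\ref{bellman} and Theorem~\ref{computability_of_MBP}.

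The genuine weak point is your justification of Step~(ii). You argue that the basic set operator preserves countable unions and countable intersections, hence each layer of the nested fixed point closes in $\omega$ steps, hence $\mu$ --- being continuous from below and above --- can be pushed through the whole nested expression. The first premise is fine for the single operator $\Phi$, but it does \emph{not} propagate through nesting: if $G(X)=\mu Y.\,\Phi(X,Y)=\bigcup_n \Phi^n(X,\emptyset)$, then $G$ is still Scott-continuous, but there is no reason for $G$ to preserve countable \emph{decreasing} intersections (a countable union need not commute with a countable intersection), so you cannot conclude that the outer $\nu X.\,G(X)$ closes at $\omega$. In other words, ``the building blocks are (co)continuous'' does not imply ``every intermediate operator produced during the nested evaluation is (co)continuous'', and it is the latter that you need to transport $\mu$ through each layer. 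The paper does not attempt this route: the proof of Theorem~\ref{main_theorem_MBP} it cites proceeds by induction on the number of priorities together with a \emph{transfinite} induction on a rank function on branching plays, precisely because the naive $\omega$-closure argument is not available. So either you must supply a separate argument that in this specific setting the nested closures really are $\omega$ (this is not obvious and would be the actual content of Step~(ii)), or you must replace your continuity-commutation sketch by a game-theoretic argument of the kind in \cite{MioThesis}.
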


Game automata \cite{prof_fach,game_auto} are special types of alternating parity tree automata. 
The class of languages recognizable by game automata includes, beside all \emph{deterministic languages}, other important examples of regular sets. The most notable examples are \emph{game languages} $W_{i,k}$ which play a fundamental role in the study of tree languages with topological methods \cite{NiwinskiArnold2008,GMMS2014}.  Game automata definable languages are, at the present moment, the largest known subclass of regular languages for which the long-standing Mostowski--Rabin index problem\footnote{The Mostowski--Rabin problem: for a given regular language $L$, compute the minimal number of \emph{priorities} required to define $L$ using an alternating parity tree automaton.} 
 is known to be decidable (see \cite{prof_fach,game_auto}). Theorem \ref{thm:compuGame} confirms the good algorithmic properties of game automata. At the same time, however, we suspect that generalizing the result of Theorem \ref{thm:compuGame} to arbitrary regular languages might be hard. Some ideas for further research in this direction are discussed in Section \ref{conclusion}. 

From Theorem \ref{thm:compuGame} we derive the following propositions (for proofs see Section \ref{failvv}).

\begin{proposition}
\label{prop_example_1}
There exists a regular language of trees $L$ 
definable by a 
deterministic automaton such that $L$ has an irrational 
probability.
\end{proposition}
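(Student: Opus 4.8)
The plan is to prove the proposition by exhibiting a single concrete example: a deterministic parity tree automaton whose accepted language $L$ has $\mu(L)$ equal to an irrational algebraic number. Theorem~\ref{thm:compuGame} already guarantees that $\mu(L)$ is algebraic and computable, so the whole content of the proposition is that this algebraic number need not be rational; for this it is enough to point at one automaton and evaluate $\mu(L)$ directly from the self-similarity of the coin-flipping measure (or, equivalently, to run the algorithm of Theorem~\ref{thm:compuGame} on it and read off the output).

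Concretely, I would work over the three-letter alphabet $\Sigma=\{a,b,c\}$ and take the complete deterministic automaton $\mathcal{A}$ using only priorities $0$ and $1$, with states $\{q,\top,\bot\}$, initial state $q$ (of priority $0$), where $\top$ (priority $0$) and $\bot$ (priority $1$) are sinks looping on every letter, and with transitions $\delta(q,a)=(\top,\top)$, $\delta(q,b)=(q,q)$, $\delta(q,c)=(\bot,\bot)$. A tree is accepted iff every branch of its unique run satisfies the parity condition. The first step is the routine observation that the run stays in state $q$ exactly on the set of nodes all of whose strict ancestors carry the letter $b$, that it accepts everything below a node once it reads an $a$, and that it creates a rejecting branch ($\bot^{\omega}$) below a node once it reads a $c$; hence $L(\mathcal{A})$ is the regular, deterministically recognizable language of trees $t$ in which no node with all strict ancestors labelled $b$ carries the letter $c$ (in particular the all-$b$ tree lies in $L$, since $q$ has priority $0$).

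The key step is the fixed-point equation. Conditioning on the root label and using that, given the root label, the two immediate subtrees are independent and again uniformly distributed, one obtains $p=\mu(L)=\tfrac13\cdot 1+\tfrac13\cdot 0+\tfrac13\,p^2$, i.e. $p^2-3p+1=0$; the only mildly non-routine point here is computing $\mu(\{\text{root}=b\text{ and both subtrees in }L\})=\tfrac13 p^2$ by independence of the subtrees. Since $p$ is a probability it lies in $[0,1]$, which forces $p=\tfrac{3-\sqrt5}{2}$, and this is irrational because $\sqrt5$ is (equivalently, $x^2-3x+1$ has no rational root by the rational-root theorem). This proves the proposition.

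The only real subtlety — and the thing to be careful about — is that the recursion must be \emph{non-degenerate}. Because a deterministic automaton accepts $t$ iff \emph{every} branch of its forced run satisfies the parity condition, a naive self-similar language tends to yield an equation whose only solutions in $[0,1]$ are $0$ or $1$ (the ``bad'' branch-set being null, or everything). The design above avoids this by letting two of the three letters trigger \emph{immediate, unconditional} acceptance (go to $\top$) or rejection (go to $\bot$), so the recursion genuinely bottoms out and the resulting polynomial has a root strictly inside $(0,1)$. Once such an automaton is in hand, checking irrationality is elementary; a two-letter variant also works, at the cost of replacing the quadratic by a quartic with no rational root.
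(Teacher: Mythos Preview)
Your proof is correct. Both you and the paper prove the proposition by exhibiting a concrete deterministic automaton over $\Sigma=\{a,b,c\}$ and reading off an irrational root of the associated polynomial equation, but the examples and the surrounding machinery differ. The paper uses $L_2=\{t:a\text{ occurs at least twice on every branch of }t\}$, whose automaton has two non-sink states; the value is obtained via the MBP reduction of Section~\ref{automata2mbp} and Theorem~\ref{main_theorem_MBP} (a nested least-fixed-point system, where selecting the \emph{least} root genuinely matters at the inner level since $x=\tfrac13+\tfrac23 x^2$ has both $\tfrac12$ and $1$ as solutions), yielding $\mu(L_2)=\tfrac14(3-\sqrt7)$. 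Your automaton has a single recursive state plus the two sinks $\top,\bot$, and you derive the recursion $p=\tfrac13+\tfrac13 p^2$ by a direct conditioning/independence argument with no appeal to the MBP framework; since this quadratic has only one root in $[0,1]$, no least/greatest fixed-point selection is needed and $\mu(L)=\tfrac12(3-\sqrt5)$ follows immediately. Your route is shorter and fully self-contained; the paper's choice has the side benefit of exercising Algorithm~\ref{alg2} on a non-trivial instance and of sitting inside the chain $L_1\supseteq L_2\supseteq\cdots\supseteq L_\infty$ that is reused for Proposition~\ref{prop_example_2} and the discussion in Section~\ref{l3}.
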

\begin{proposition}
\label{prop_example_2}
There exists a regular language of trees $L$ 
definable by a  deterministic 
automaton such that $L$ is comeager and $L$ has probability $0$.
\end{proposition}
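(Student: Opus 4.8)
The plan is to exhibit an explicit witness over the alphabet $\Sigma=\{a,b\}$. Let $L$ be the set of trees $t$ such that along \emph{every} branch of $t$ the pattern $aa$ occurs infinitely often --- that is, every branch contains infinitely many nodes $v$ with $t(v)=a$ and with the child of $v$ on that branch again labelled $a$. This language is recognised by a deterministic parity (indeed B\"uchi) tree automaton: the automaton carries along each branch the information whether the last two letters read formed an occurrence of $aa$, the state recording such an occurrence receives even priority and the other states odd priority, and the acceptance condition asks that every branch visit the even priority infinitely often. In particular $L$ is definable by a deterministic automaton, hence by a game automaton, so Theorem~\ref{thm:compuGame} applies and $\mu(L)$ is a computable algebraic number; it then remains to verify that $L$ is comeager and that $\mu(L)=0$. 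In fact $0$ is exactly the number the algorithm of Theorem~\ref{thm:compuGame} outputs on this automaton, but I would also give a direct argument.

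\emph{$L$ is comeager.} I would prove that $L^{c}$ is meager. A branch carries only finitely many occurrences of $aa$ iff from some node $v$ on it avoids $aa$ entirely, so
\[
L^{c}=\bigcup_{v}S_{v},\qquad S_{v}=\{\,t:\text{the subtree of }t\text{ rooted at }v\text{ has an infinite branch on which }aa\text{ never occurs}\,\}.
\]
Each $S_{v}$ is nowhere dense: given any basic open set, i.e.\ any finite partial labelling $p$, extend $p$ to $p'$ by labelling every newly added node (a couple of levels, down past the level of $v$) with $a$; then every infinite branch issued from $v$ already meets an occurrence of $aa$ among the nodes fixed by $p'$, so $[p']\cap S_{v}=\emptyset$; and $S_{v}$ is moreover closed (a convergent subsequence of witnessing branches, the branch space being compact, witnesses membership of the limit tree). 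Hence $L^{c}$ is a countable union of nowhere dense sets, i.e.\ meager, and $L$ is comeager.

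\emph{$\mu(L)=0$.} I would prove $\mu(L^{c})=1$. Let $\theta$ be the probability that a random tree has an infinite $aa$-free branch starting at the root. The nodes of a random tree reachable from the root along $aa$-free paths form a two-type Galton--Watson tree --- the type of a node being its label --- with offspring mean matrix $\bigl(\begin{smallmatrix}1&1\\1&0\end{smallmatrix}\bigr)$, whose spectral radius $\tfrac{1+\sqrt{5}}{2}$ exceeds $1$; the process is therefore supercritical and survives with positive probability, whence $\theta>0$ (it also goes extinct with probability at least $\tfrac18$, for instance when the root and both its children carry $a$, so $\theta<1$). Now for each $n$ the $2^{n}$ subtrees of a random tree rooted at the depth-$n$ nodes are independent copies of a random tree, so the probability that none of them has an $aa$-free branch from its root is $(1-\theta)^{2^{n}}$, which tends to $0$. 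Whenever some depth-$n$ node $v$ does carry such a branch, prefixing the path from the root to $v$ gives a branch of the whole tree with only finitely many occurrences of $aa$; thus $\mu(L^{c})\ge 1-(1-\theta)^{2^{n}}$ for all $n$, so $\mu(L^{c})=1$ and $\mu(L)=0$.

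The one genuinely delicate point is $\theta>0$: the first-step equations for $\theta$ form a polynomial system which also admits the trivial solution $0$, and one must argue that the correct value is the \emph{largest} solution --- either through the supercriticality above, or by noting that ``$aa$-free branch of length $\ge n$'' is a decreasing sequence of events whose intersection is ``$aa$-free infinite branch'' (K\"onig's Lemma), so that $\theta$ is obtained by iterating the system downward from $1$. I would also remark that this is genuinely a phenomenon of the two-letter pattern together with the branching of the tree: replacing $aa$ by the single letter $a$ makes an $a$-node block every path, forces $\theta=0$, and yields probability $1$ rather than $0$.
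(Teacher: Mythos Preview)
Your argument is correct, but both the witness and the method differ from the paper's.

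\textbf{Different witness.} The paper takes the three-letter alphabet $\Sigma=\{a,b,c\}$ and the language $L_\infty=\{t:\text{every branch carries infinitely many }a\text{'s}\}$, recognised by a two-state deterministic automaton. You instead work over $\{a,b\}$ with the condition ``the pattern $aa$ occurs infinitely often on every branch''. Your closing remark explains exactly why these choices are parallel: over two letters the single-letter version gives a \emph{critical} branching process (mean offspring $2\cdot\tfrac12=1$) and fails, so one must either enlarge the alphabet (the paper's choice) or strengthen the pattern (your choice) to push the complementary branching process into the supercritical regime.

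\textbf{Different proof that $\mu(L)=0$.} The paper feeds its automaton to Algorithm~\ref{alg2}: it writes down the MBP, the nested system $\nu\vec y^{\,2}.\mu\vec y^{\,1}.g(\vec y^{\,1},\vec y^{\,2})$, and lets \texttt{qepcad} eliminate quantifiers, obtaining $x_2=0$. You bypass the algorithm entirely with a direct probabilistic argument: the $aa$-free descendants form a two-type Galton--Watson process with mean matrix of Perron root $\tfrac{1+\sqrt5}{2}>1$, hence $\theta>0$, and then independence of the $2^n$ depth-$n$ subtrees gives $\mu(L^c)\ge 1-(1-\theta)^{2^n}\to1$. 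This is self-contained and does not invoke Theorem~\ref{thm:compuGame} at all; the paper's route, by contrast, is meant precisely to illustrate that theorem. Your argument would equally well handle the paper's $L_\infty$ (there the one-type process has mean $\tfrac43>1$).

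\textbf{Comeagerness.} Here the two arguments coincide: both write the complement as a countable union of closed nowhere-dense sets indexed by the node from which the bad branch becomes pattern-free, and kill each piece by extending any basic open set with a block of $a$'s.
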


These two propositions should be contrasted with known properties of regular languages of infinite words. First, a result of Staiger in \cite{staiger} states that a regular language $L$ of infinite words has coin-flipping measure $0$ if and only if it is of \emph{Baire first category} (or \emph{meager}). Proposition \ref{prop_example_2} shows that this correspondence fails in the context of infinite trees. 
Second,  the coin-flipping measure of a regular language of infinite words is always rational (see, e.g., Theorem 2 of \cite{Chat2004}).  Hence, the probabilistic properties of regular languages of trees seem to be significantly more refined than in the case of languages of $\omega$-words. 

Lastly, we calculate the probability of all game languages $W_{i,k}$ (see \cite{NiwinskiArnold2008, GMMS2014} and Subsection \ref{wik}), a result that might eventually be useful given the importance of game languages in the topological study of regular sets of trees.

\newcommand{\propwik}{ 
For $0\!\leq\!i\!<\!k$, the game language $W_{i,k}$ has probability $0$ if $k$ is odd and $1$ if $k$ is even.} 
\begin{proposition}
\label{prop_example_3}
\propwik
\end{proposition}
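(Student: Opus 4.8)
The plan is to feed the canonical game automaton for $W_{i,k}$ to the algorithm underlying Theorem~\ref{thm:compuGame} and then observe that the system of equations it produces is so symmetric that it collapses to a single equation, whose value is forced to be $0$ or $1$ according to the parity of the highest priority $k$. Concretely, recall that $W_{i,k}$ is recognized by a game automaton $\Aa_{i,k}$ over the alphabet $\Sigma_{i,k} = \{\exists,\forall\}\times\{i,\dots,k\}$ (a symbol codes a position of the parity game: whose move it is and its priority), with states $q_i,\dots,q_k$, priorities $\Omega(q_p)=p$, and transitions of the form: reading $(\oplus,p')$ in any state, both subtrees are handed to $q_{p'}$, joined by $\vee$ if $\oplus=\exists$ and by $\wedge$ if $\oplus=\forall$ (the state $q_p$ merely records the priority of the previously read node; the root's priority is irrelevant and can be absorbed into a neutral initial state).

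Since the two subtrees of a coin-flipping random tree are independent coin-flipping random trees and regular sets are measurable \cite[Theorem~1]{GMMS2014}, a $\vee$-transition to $q_{p'}$ contributes acceptance probability $1-(1-x_{p'})^2$ and a $\wedge$-transition contributes $x_{p'}^2$, where $x_{p'}$ is the acceptance probability from $q_{p'}$. Averaging over the $2(k-i+1)$ letters, the equation that the algorithm attaches to each state $q_p$ is
\[
x_p \;=\; \frac{1}{2(k-i+1)}\sum_{p'=i}^{k}\Bigl[\bigl(1-(1-x_{p'})^2\bigr)+x_{p'}^{2}\Bigr]
\;=\;\frac{1}{\,k-i+1\,}\sum_{p'=i}^{k}x_{p'},
\]
so that the body of the system sets every $x_p$ equal to the \emph{average} of all the $x_{p'}$; in particular the body forces $x_i=\dots=x_k$ and, on the diagonal, acts as the identity.

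It remains to evaluate the nested fixpoint $\mu(W_{i,k})=\eta_k x_k.\,\cdots\,\eta_i x_i$ of the system above, where $\eta_p$ is a least fixpoint when $p$ is odd and a greatest fixpoint when $p$ is even. The crucial point is that every \emph{inner} fixpoint is uniquely determined: with all other variables frozen, the equation for the currently innermost variable is, after back-substituting the already-computed inner solutions, an affine map of slope strictly below $1$ on $[0,1]$, hence a contraction; so least and greatest fixpoints coincide and the choice of $\eta_p$ is immaterial for $p=i,\dots,k-1$. Carrying out this (short, telescoping) substitution, the surviving equation for the outermost variable $x_k$ reduces to $x_k=x_k$; therefore $\eta_k x_k.\,x_k$ is $0$ if $k$ is odd and $1$ if $k$ is even, and propagating this common value inward gives $x_i=\dots=x_k=\mu(W_{i,k})\in\{0,1\}$, as claimed.

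I expect the main difficulty to be bookkeeping rather than conceptual: pinning down the exact form of $\Aa_{i,k}$ and of the fixpoint system produced by Theorem~\ref{thm:compuGame} so that they match the symmetric equation above (notably, that the priority contributed by a run step is the one recorded in the state, up to an irrelevant one-step shift), and checking the slope bound and the telescoping identity in the boundary case $k=i+1$. It is worth stressing why only the parity of $k$ enters: $k$ is the highest priority and hence labels the outermost fixpoint, the only stage of the computation whose body fails to be a contraction — which is exactly why the answer is independent of $i$. As a consistency check, the duality $\overline{W_{i,k}}\cong W_{i+1,k+1}$ (renaming priorities $p\mapsto p+1$ swaps the two players) yields $\mu(W_{i+1,k+1})=1-\mu(W_{i,k})$, in agreement with the alternation in $k$.
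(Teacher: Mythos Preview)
Your proposal is correct and starts from exactly the same reduction as the paper: you feed $\Aa_{i,k}$ into the MBP construction, simplify $x\odot x + x\cdot x = 2x$, and obtain the fully symmetric linear system $x_p=\frac{1}{n}\sum_{p'} x_{p'}$ with $n=k-i+1$. Where you diverge is in how you resolve the nested fixpoint.

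The paper \emph{guesses} the outermost value ($x_k=0$ when $k$ is odd) and then verifies that this is consistent by showing that, with $x_k$ fixed to $0$, the remaining $(k{-}1)\times(k{-}1)$ linear system has $(0,\dots,0)$ as its \emph{unique} solution; uniqueness is obtained by an explicit determinant computation (their Lemma~\ref{lemma:wikdet}, giving $\det A_k=(-1)^{k-1}/k\neq 0$). You instead evaluate the nested fixpoint \emph{directly}: eliminating the innermost variable gives slope $1/n$, the next $1/(n-1)$, and so on down to $1/2$, so every inner stage is an affine contraction on $[0,1]$ and its $\mu$/$\nu$ choice is immaterial; the telescoping substitution then leaves the outermost equation as $x_k=x_k$, whose extremal fixpoint is $0$ or $1$ according to the parity of $k$. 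Your argument is shorter and more conceptual (no matrix, no lemma), and it makes transparent \emph{why} only the parity of $k$ matters (the outermost is the sole non-contracting stage). The paper's route, by contrast, produces an exact determinant as a by-product and fits their narrative of ``solve the system returned by the algorithm''. Your duality remark $\overline{W_{i,k}}\cong W_{i+1,k+1}$ is a nice sanity check that the paper does not include.
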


\subsection{The Algorithm}

In Section \ref{automata2mbp} we propose Algorithm \ref{alg2} which computes the probability of regular languages recognized by game automata.  
Algorithm \ref{alg2} is based on a reduction to \emph{Markov Branching plays} (MBP's): to each game automaton $\mathcal{A}$ we associate a MBP $\mathcal{M}$. The \emph{value} of $\mathcal{M}$ can be computed and corresponds to the probability of the language recognized by $\mathcal{A}$.
This reduction to MBP's is described in Sections \ref{metaparity_sec} and \ref{automata2mbp}.

The notion of MBP, as a special kind of \emph{two-player stochastic meta-parity game} has been introduced by the second author in \cite{MioThesis,MIO2012b} 
in order to interpret a probabilistic version of the modal $\mu$-calculus. For a given MBP $\mathcal{M}$ having $n$ states, the vector $val\!\in\![0,1]^n$ of values of $\mathcal{M}$ can be expressed as the solution of a system $\mathcal{S}$ of (nested) least and greatest fixed-point equations over the space 
$[0,1]^n$. From $\mathcal{S}$ one can then construct a first order formula $\phi_{\mathcal{S}}(val)$ in the language of real-closed fields having the property that $val$ is the unique tuple of real numbers satisfying $\phi_{\mathcal{S}}$. The  tuple  $val$ can be computed by Tarski's \emph{quantifier elimination} algorithm \cite{Tarski1951} and consists of algebraic numbers.  See Algorithm \ref{alg1} in Section  \ref{metaparity_sec} for a 
 description of the procedure for computing the value of MBP's.

One can find interesting the connections between the machinery of MBP's  (and thus, as mentioned, the probabilistic $\mu$-calculus), the class of languages definable by game automata, the algorithmic problem of computing the probability of regular languages of trees and the usage of Tarski's quantifier elimination procedure.

\subsection{Related Work}\label{related_work_intro}

In \cite{staiger_comp} L.~Staiger presented an algorithm for computing the \emph{Hausdorff measure} of regular sets of $\omega$-words. The method, based on the decomposition of the input language into simpler components, can be adapted to compute the coin--flipping measure of regular sets of $\omega$-words. 
Our research on the coin-flipping measure of regular languages of trees can be seen as a continuation of Staiger's work.

Natural variants of the qualitative version of the Probability Problem, obtained by replacing ``has probability $1$''  by other notions of largeness, are known to have positive solutions: in \cite{niwinski91}  D.~Niwiński described an algorithm which takes as input a regular language of trees $L$ (presented as a Rabin tree automaton) and decides if $L$ is uncountable and, similarly, an algorithm for establishing if a regular language of trees $L$ is comeager can be extracted from the result of \cite{MM2015a}.

\emph{Addendum.} After the submission of this article we have been informed that the Probability Problem has already been implicitly considered in \cite{CDK2012}, although differently phrased as the verification problem for a class of stochastic branching processes. Following our terminology, in \cite{CDK2012} the authors provide an algorithm for computing the probability of regular languages definable by deterministic tree automata. Hence our results can be seen as extending the work of \cite{CDK2012} from deterministic to game-automata definable languages.  

%
%
\section{Background in Topology and Automata Theory}\label{sec:background}
\label{tech_back}
\subsection{Topology and measure}
\label{dst}\label{sec:logicautomata}

In this section we present elementary topological and measure--theoretical notions required in this work. We refer to  \cite{kechris} as a standard reference on the subject.

The set of natural numbers is denoted by $\omega$. A topological space $X$ is \emph{Polish} if it is separable and completely metrizable. An important example of a Polish space is the \emph{Cantor space} $\{0,1\}^\omega$ of infinite sequences of bits endowed with the product topology. 
In this paper we are 
interested in the probability Lebesgue measure $\mu$ on the product space $\Sigma^I$ for $I$, a countable set of indices. The measure $\mu$ is uniquely defined by the assignment $\mu(\{ t\in \Sigma^I\mid t(i_1) = a_1,\ldots,t(i_k)=a_k\})=(\frac{1}{|\Sigma|})^k$
for $i_1,\ldots,i_k\in I$, $a_1,\ldots,a_k\in\Sigma$  ($i_{j}\neq i_{j'}$ whenever $j\neq j'$, see \cite[Chapter 17]{kechris} for additional details). In particular, for the alphabet $\Sigma=\{0,1\}$ and $I=\omega$ this is known as the coin--flipping probability measure on the Cantor space.

The countable set $V\!=\!\{L,R\}^*$ of finite words over the alphabet $\{L,R\}$ is called the \emph{full binary tree} and each $v\!\in\!\{L,R\}^*$ is referred to as a \emph{vertex}. The product space $\Sigma^V$ is denoted by $\trees{\Sigma}$ and an element $t\!\in\!\trees{\Sigma}$ is called a $\Sigma$-labeled tree, or just a $\Sigma$-tree. Intuitively, the stochastic processes associated with the coin--flipping measure $\mu$ on $\trees{\Sigma}$ generates an infinite $\Sigma$-tree by labeling each vertex with a randomly (uniformly) chosen label in $\Sigma$.

Given a topological space $X$, a set  $A\!\subseteq X$ is \emph{nowhere dense} if the interior of its closure is the empty set, that is $\texttt{int}(\texttt{cl}(A))\!=\! \emptyset$. A set $A\!\subseteq\!X$ is of \emph{(Baire) first category} (or \emph{meager}) if $A$ can be expressed as a countable union of nowhere dense sets. 
The complement of a meager set is called \emph{comeager}.  

\subsection{Alternating Parity Tree Automata and Game Automata}
\label{alt_aut} 
We include a brief exposition of alternating automata which follows the presentation in \cite[Appendix C]{mullerschupp}. In this paper we are mostly interested in a subclass of alternating parity tree automata called game automata, which is introduced later in the Section. 
\begin{definition}[Alternating Parity Tree Automaton]
Given a finite set $X$, we denote with $\mathcal{DL}(X)$ the set of expressions $e$ generated by the grammar  $e::= x\!\in\! X \mid e \wedge e \mid e \vee e$. An \emph{alternating parity tree automaton} over a finite alphabet $\Sigma$ is a tuple $\mathcal{A}\!=\!\langle \Sigma,Q, q_0, \delta, \pi)$ where $Q$ is a finite \emph{set of states}, $q_0\!\in\!Q$ is the \emph{initial state}, $\delta: Q\times \Sigma \rightarrow \mathcal{DL}( \{L,R\}\times Q)$ is the \emph{alternating transition function}, and $\pi\!:\! Q\!\rightarrow\omega$ is the \emph{parity condition}.
\end{definition}

An alternating parity tree automaton $\mathcal{A}$ over the alphabet $\Sigma$ defines, or ``accepts'', a set of $\Sigma$-trees. The {\em acceptance} of a tree $t\!\in\! \trees{\Sigma}$ is defined via a two-player ($\exists$ and $\forall$) game of infinite duration denoted by $\mathcal{A}(t)$. Game states of  $\mathcal{A}(t)$ are of the form $\langle \vv{x},q\rangle$ or $\langle \vv{x},e\rangle$ with $\vv{x}\!\in\! \{L,R\}^{*}$, $q\!\in\! Q$ and $e\!\in\! \mathcal{DL}( \{L,R\}\times Q)$. 

The game $\mathcal{A}(t)$ starts at state $\langle\epsilon, q_0\rangle$. Game states of the form $\langle \vv{x},q\rangle$, including the initial state,  have only one successor state, to which the game progresses automatically. The successor state is $\langle\vv{x},e\rangle$ with $e\!=\!\delta(q,a)$, where  $a\!=\! t(\vv{x})$ is the labeling of the vertex $\vv{x}$ given by $t$.  The dynamics of the game 
 at states $\langle\vv{x},e\rangle$ depends on the possible shapes of $e$. 
If $e\!=\!e_1\vee e_2$, then Player $\exists $ moves either to $\langle\vv{x},e_1\rangle$ or $\langle\vv{x},e_2\rangle$. If $e\!=\!e_1\wedge e_2$, then Player $\forall$ moves either to $\langle\vv{x},e_1\rangle$ or $\langle\vv{x},e_2\rangle$. If $e\!=\! (L,q)$ then the game progresses automatically to the state $\langle \vv{x}.L, q\rangle$. Lastly, if $e\!=\! (R,q)$ the game progresses automatically to the state $\langle \vv{x}.R, q\rangle$. 
Thus a play in the game $\mathcal{A}(t)$ is a sequence $\Pi$ of game--states, that looks like: $
\Pi = (\langle \epsilon,q_0\rangle, \dots, \langle L, q_1\rangle, \dots,\langle LR, q_2\rangle,\dots,  \langle LRL, q_3\rangle, \dots, \langle LRLL, q_4\rangle,\dots)$, 
where the dots represent part of the play in game--states of the form $\langle\vv{x},e\rangle$. Let  $\infty(\Pi)$ be the set of automata states $q\!\in\! Q$ occurring infinitely often in configurations $\langle\vv{x},q\rangle$ of $\Pi$. We then say that the play $\Pi$ of $\mathcal{A}(t)$ is winning for $\exists$, if $\max\{ \pi(q)\mid q\!\in\!\infty(\Pi)\}$ is an even number. The play $\Pi$ is winning for $\forall$ otherwise. The set (or ``language'') of $\Sigma$-trees defined by $\mathcal{A}$ is the collection $\{t\!\in\!\trees{\Sigma} \mid \exists \textnormal{ has a winning strategy in the game } \mathcal{A}(t)\}$. 

We reserve the symbols $\top$ and $\bot$ for two special \emph{sink} states  having even and odd priority, respectively. The transition function is defined, for all $a\!\in\! \Sigma$, as $\delta(\top,a)\!=\!(L,\top)\wedge (R,\top)$ and  $\delta(\bot,a)\!=\!( L,\bot)\wedge (R,\bot)$. Clearly every tree is accepted at the state $\top$ and rejected at $\bot$. 

{\em Game automata} are a subfamily of alternating parity tree automata satisfying the constraint that, for each $q\!\in\!Q$ and $a\!\in\!\Sigma$, the transition $\delta(q,a)\!=\!e$ has either the form $e\!=\! (L,q_L ) \vee (R,q_R)$ or $e\!=\! (L,q_L ) \wedge (R,q_R)$ (see \cite{prof_fach,game_auto} for more information about this class of automata). Transitions of a game automaton ${\mathcal A}$ can be schematically depicted as in the figure above 
with the left--hand and right--hand diagrams representing the transitions $(q,a)\to (L,q_L)\wedge (R,q_R)$ and  $(q,a)\to (L,q_L)\vee (R,q_R)$, respectively.

\begin{figure}[H]
\centering
\begin{tikzpicture}[scale=0.7,->,level distance=0.5in,sibling distance=.15in,>=stealth',thick]

\tikzset{grow'=up}
\Tree 
[.\node [draw, circle, minimum size = 2*\diameter] {$q$};
        \edge[{font=\small}] node[left = 0.3cm, near end] {$a\in\Sigma$}; 
						    [.\node [draw, minimum size = 2*\smalldiameter] {};
						        \node [draw, circle, minimum size = \diameter] {$q_L$};
						        \node [draw, circle, minimum size = \diameter] {$q_R$};
						    ]
]

\begin{scope}[shift={(2.75cm,0cm)}]
\Tree 
[.\node [draw, circle, minimum size = 2*\diameter] {$q$};
        \edge[{font=\small}] node[left = 0.3cm, near end] {$a\in\Sigma$}; 
						    [.\node [draw, diamond, minimum size = 2*\smalldiameter] {};
						        \node [draw, circle, minimum size = \diameter] {$q_L$};
						        \node [draw, circle, minimum size = \diameter] {$q_R$};
						    ]
]						    
\end{scope}

\end{tikzpicture}
\end{figure}

{\em Deterministic automata} are a subfamily of game automata satisfying the stronger constraint that, for each $q\!\in\!Q$ and $a\!\in\!\Sigma$, the transition $\delta(q,a)\!=\!e$ has the form $e\!=\! (L,q_L ) \wedge (R,q_R)$. Note that the sink states $\top$ and $\bot$ defined above have transitions satisfying this requirement.


%
%
\section{Introduction to meta-parity games}
\label{metaparity_sec}
\label{section:metaparity}

In this Section we describe a class of stochastic processes called \emph{Markov branching plays} (MBP's) \cite{MioThesis,MIO2012b}  which, as we will observe, is closely related to game automata and will provide a method for calculating the probability of regular languages defined by such automata.  
For a quick overview, a procedure for computing the value associated with a MBP is presented as Algorithm \ref{alg1}, at the end of this section. The procedure for computing the probability  of regular languages defined by game automata in presented as Algorithm \ref{alg2} in the next section.


We assume familiarity with the standard concepts of Markov chain and 
 two-player stochastic ($2\frac{1}{2}$-player) parity game (see, e.g.,  \cite{ChatPhD}).  Ordinary $2\frac{1}{2}$-player parity games are played on directed graphs whose set of states is partitioned into Player $1$, Player $2$ and probabilistic states. A $2\frac{1}{2}$-player parity game with neither Player $1$ nor Player $2$ states can be identified with a \emph{Markov chain}.

%
%

\emph{Two-player stochastic} \emph{meta-parity games} \cite{MioThesis,MIO2012b} generalize $2\frac{1}{2}$-player parity games by allowing the directed graph to have two additional kinds of states called $\exists$--branching states and $\forall$--branching states. In this paper we will only consider $2\frac{1}{2}$-player meta-parity games with neither Player $1$ nor Player $2$ states. Such structures, which thus constitute a generalization of Markov chains, are called \emph{Markov branching plays} (MBP's). In what follows we provide a quick description of MBP and refer to \cite{MioThesis} for a detailed account.

\begin{definition}[Markov Branching Play]
A Markov branching play (MBP) is a structure $\mathcal{M}\!=\!\langle (S,E), (S_{P}, B_\exists,B_\forall), p , Par \rangle$ where:
\begin{itemize}
\item  $(S,E)$ is a directed graph with finite set of vertices $S$ and transition relation $E$. We say that $s^\prime$ is a successor of $s$ if $(s,s^\prime)\!\in\! E$. We assume that each vertex has at least one successor state in the graph  $(S,E)$. 
\item The triple $(S_{P},B_\exists,B_\forall)$ is a partition of $S$ into probabilistic, $\exists$-branching and $\forall$-branching states.
\item The function $p\! :\! S_{P}\!\rightarrow\! (S\rightarrow[0,1])$ associates to each probabilistic state $s$ a discrete probability distribution $p(s):S\rightarrow[0,1]$ supported over the (nonempty) set of successors of $s$ in the graph $(S,E)$. 
\item Lastly, the function $Par\!:\! S\rightarrow\omega$ is the \emph{parity (or priority) assignment}. 
\end{itemize}\end{definition}

%
%

Recall that a Markov chain represents the stochastic process associated with a \emph{random infinite walk} on its set of states. A MBP represents the more involved stochastic process, described below, of generation of a random unranked and unordered \emph{tree} $T$ whose vertices are labeled by states of the MDP. 

\vspace{5mm}
\noindent
\emph{MBP's as Stochastic Processes:} given a MBP $\mathcal{M}\!=\!\langle (S,E), (S_{P}, B_\exists,B_\forall), p , Par \rangle$ and an initial vertex $s_0\!\in\! S$, the stochastic process of construction of $T$ is described as follows. 
\begin{itemize}
\item 
The construction starts from the root of $T$ which is labeled by $s_0$. 
\item A leaf $x$ in the so far constructed tree $T$ is extended, \emph{independently} from all other leaves, depending on the type of its labeling state $s$, as follows:
\begin{itemize}
\item If $s\!\in\! S_P$ then $x$ is extended with a unique child which is labeled by a successor state $s^\prime$ of $s$ randomly chosen in accordance with $p(s)$. 
\item If $s\!\in\! B_\exists$ or $s\!\in\! B_\forall$ and $\{s_1,\dots, s_n\}$ are the successors of $s$ in $\mathcal{M}$, then $x$ is extended with $n$ children $y_1,\dots y_n$ and $y_i$ is labeled by $s_i$, for $1\!\leq\! i\! \leq \! n$. 
\end{itemize}
\end{itemize}

We give in Figure \ref{matteo_pic1} an example of a MBP. Probabilistic states, $\exists$-branching and $\forall$-branching states are marked as circles, diamond and boxes, respectively.  The first six initial steps of the stochastic process associated with $\mathcal{M}$ at state $q_1$ are depicted in Figure \ref{matteo_pic3}.
In the first step, the construction of $T$ starts by labeling the root by $q_1$. Since $q_1$ is a probabilistic state, the tree is extended (second step) with only one child labeled by either $q_2$ (with probability $\frac{1}{3}$) or $q_3$ (prob. $\frac{2}{3})$. The picture  shows the case when $q_2$ is chosen. Since the new leaf is labeled by $q_2$, and this is a $\exists$-branching state, the tree is extended by adding one new vertex for each successor of $q_2$ in $\mathcal{M}$, i.e., for both $q_1$ and $q_4$. The construction continues as described above. For example, the probability that the generated infinite tree will have the prefix as at the bottom right of Figure \ref{matteo_pic3} is $\frac{1}{3}\cdot \frac{2}{3}\cdot \frac{1}{2}\!=\!\frac{2}{18}$.
\begin{figure}[H] 
\centering
\begin{tikzpicture}[scale=.2,->,level distance=0.7in,sibling distance=.4in]

\clip (-10,-2) rectangle (10, 17);

\node (q1) [draw, circle, minimum size = \diameter,inner sep=2] {$q_1$};
\node (q2) [draw, diamond, aspect = 1.5, minimum size = 2*\diameter,inner sep=1] at ($(q1)+(-5,5)$) {$q_2$};
\node (q3) [draw, minimum size = 1.5*\diameter,inner sep=4] at ($(q1)+(5,5)$) {$q_3$};
\node (q4) [draw, circle, minimum size = \diameter,{font=\small},inner sep=2]  at ($(q1)+(0,10)$) {$q_4$};
\path (q1) edge [{font=\small}] node[right = 0.05cm, near end] {$\frac{1}{3}$} (q2);
\path (q1) edge [{font=\small}] node[left = 0.05cm, near end] {$\frac{2}{3}$} (q3);
\path (q2) edge [bend right=40] node {} (q1);
\path (q2) edge node {} (q4);
\path (q3) edge [bend left=40] node {} (q1);
\path (q3) edge node {} (q4);
\path (q4) edge [font=\small, loop above] node {$\frac{1}{2}$} (q4);
\path (q4) edge [font=\small, left = 0.05cm, near end, bend right=40] node {$\frac{1}{2}$} (q2);

\end{tikzpicture}

        
\caption{An example of a MBP.}
\label{matteo_pic1}
\end{figure}
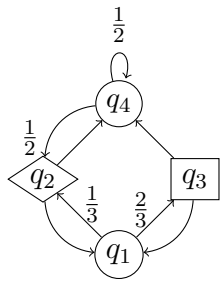

The kind of infinite trees produced by the stochastic process just described 
are called \emph{branching plays}. Branching plays are characterized by the property that each vertex labeled with a probabilistic state has only one child, and each vertex labeled with a ($\exists$ or $\forall$) branching state $s$ has as many children as there are successors of $s$ in the MBP. 
\begin{figure}[H]
\centering
\begin{tikzpicture}[scale=0.7,->,level distance=0.4in,sibling distance=.15in]

\tikzset{grow'=up}
\Tree 
[.\node [draw, circle, minimum size = 2*\diameter] {$q_1$};
        \edge[{font=\small}] node[right = 0.05cm, near end] {$ $};
        [.\node [draw, diamond, aspect = 1.5, minimum size = 2*\diameter] {$q_2$};
            [.\node [draw, circle, minimum size = 2*\diameter] {$q_1$};
                \node [draw, minimum size = 4*\diameter] {$q_3$};
            ]
            \node [draw, circle, minimum size = 2*\diameter] {$q_4$};
        ]
]

\begin{scope}[shift={(5cm,0cm)}]
\Tree 
[.\node [draw, circle, minimum size = 2*\diameter] {$q_1$};
        \edge[{font=\small}] node[right = 0.05cm, near end] {$ $};
        [.\node [draw, diamond, aspect = 1.5, minimum size = 2*\diameter] {$q_2$};
            [.\node [draw, circle, minimum size = 2*\diameter] {$q_1$};
                \node [draw, minimum size = 4*\diameter] {$q_3$};
            ]
            [.\node [draw, circle, minimum size = 2*\diameter] {$q_4$};
                \node [draw, circle, minimum size = 2*\diameter] {$q_4$};
            ]
        ]
]
\end{scope}

\begin{scope}[shift={(10cm,0cm)}]
\Tree 
[.\node [draw, circle, minimum size = 2*\diameter] {$q_1$};
        \edge[{font=\small}] node[right = 0.05cm, near end] {$ $};
        [.\node [draw, diamond, aspect = 1.5, minimum size = 2*\diameter] {$q_2$};
            [.\node [draw, circle, minimum size = 2*\diameter] {$q_1$};
                [.\node [draw, minimum size = 4*\diameter] {$q_3$};
                    \node (q1) [draw, circle, minimum size = 2*\diameter] {$q_1$};
                    \node (q4) [draw, circle, minimum size = 2*\diameter] {$q_4$};
                ]
            ]
            [.\node [draw, circle, minimum size = 2*\diameter] {$q_4$};
                \node (q4prime) [draw, circle, minimum size = 2*\diameter] {$q_4$};
            ]
        ]
]
\end{scope}

\begin{scope}[shift={(0cm,5.5cm)}]
\Tree 
[.\node [draw, circle, minimum size = 2*\diameter] {$q_1$};
]
\end{scope}

\begin{scope}[shift={(5cm,5.5cm)}]
\Tree 
[.\node [draw, circle, minimum size = 2*\diameter] {$q_1$};
        \edge[{font=\small}] node[right = 0.05cm, near end] {$ $};
        \node [draw, diamond, aspect = 1.5, minimum size = 2*\diameter] {$q_2$};
]
\end{scope}

\begin{scope}[shift={(10cm,5.5c m)}]
\Tree 
[.\node [draw, circle, minimum size = 2*\diameter] {$q_1$};
        \edge[{font=\small}] node[right = 0.05cm, near end] {$ $};
        [.\node [draw, diamond, aspect = 1.5, minimum size = 2*\diameter] {$q_2$};
            \node [draw, circle, minimum size = 2*\diameter] {$q_1$};
            \node [draw, circle, minimum size = 2*\diameter] {$q_4$};
        ]
]
\end{scope}

\node (q1v1) at ($(q1)+(-0.5,1)$) {};
\node (q1v2) at ($(q1)+(0.5,1)$) {};
\node (q4v1) at ($(q4)+(0,1)$) {};
\node (q4primev1) at ($(q4prime)+(0,1)$) {};
\path[draw, densely dotted] (q1) edge node {} (q1v1);
\path[draw, densely dotted] (q1) edge node {} (q1v2);
\path[draw, densely dotted] (q4) edge node {} (q4v1);
\path[draw, densely dotted] (q4prime) edge node {} (q4primev1);
\end{tikzpicture}
\caption{The stochastic process associated with the MBP in Figure \ref{matteo_pic1}.}
\label{matteo_pic3}
\end{figure}
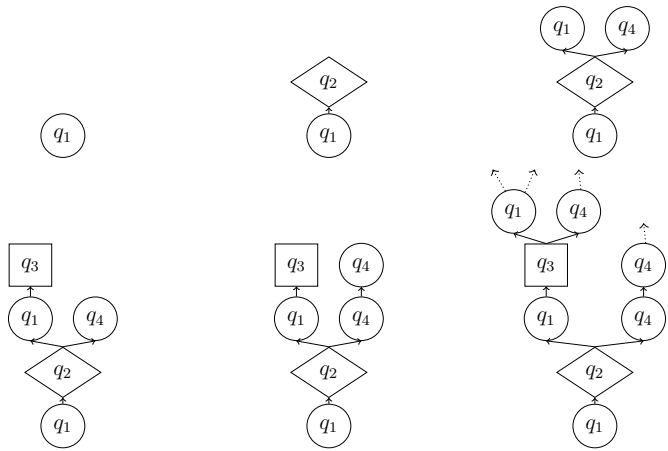


%
%

The collection of branching plays in a MBP $\mathcal{M}$ starting from a state $s$ is denoted by $\mathcal{BP}(\mathcal{M},s)$. 
The set $\mathcal{BP}(\mathcal{M},s)$ naturally carries a Polish topology making $\mathcal{BP}(\mathcal{M},s)$ homeomorphic to the Cantor space (see, e.g., Definition 4.4 in \cite{MIO2012b}). The stochastic process associated to a MBP $\mathcal{M}$, specified on the previous page, can be naturally formalized by a probability measure $\mu_{\mathcal{M}}$ over the space $\mathcal{BP}(\mathcal{M},s)$ of branching plays. See also Definition 4.7 in \cite{MIO2012b} for a formal definition.


Each branching play $T$ can itself be viewed as an ordinary (infinite) two-player parity game $\mathcal{G}(T)$, played on the tree structure of $T$,  by interpreting the vertices of $T$ labeled by $\exists$-branching and $\forall$-branching states as under the control of Player $\exists$ and Player $\forall$, respectively. All other states (i.e., those labeled by a probabilistic state) have a unique successor in $T$ to which the game $\mathcal{G}(T)$ progresses automatically. Lastly, the parity condition associated to each vertex corresponds to the parity assigned in $\mathcal{M}$ to the state labeling it.  We denote with $\mathcal{W}_s$ the set of branching plays starting at $s$ and winning for Player $\exists$, i.e., the set defined as: $\mathcal{W}_s = \big\{ T \in \mathcal{BP}(\mathcal{M},s) \mid \textnormal{ Player $\exists$ has a winning strategy in }\mathcal{G}(T) \big\}$.

\begin{definition}[Value of a MBP]
The \emph{value} of a MBP $\mathcal{M}$ at a state $s$, denoted by $val(\mathcal{M},s)$, is the probability of generating a branching play winning for $\exists$ starting the stochastic process from the state $s$. Formally, $val(\mathcal{M},s)= \mu_{\mathcal{M}}(\mathcal{W}_s)$.
\end{definition}

We remark that the above definition is valid because the set $\mathcal{W}_s$ is $\mu$-measurable for every Borel measure $\mu$ on the space $\mathcal{BP}(\mathcal{M},s)$ (\cite{GMMS2014}) and thus also for $\mu_{\mathcal{M}}$. 

\subsection{How to compute the value of a MBP}\label{subsection_howto_compute}
In this subsection we show how the values  $val(\mathcal{M},s)$ can be computed. The algorithm is based on a result of \cite{MioThesis,MIO2012b}, formulated as Theorem \ref{main_theorem_MBP} below, characterizing such values as the solution of an appropriate system of (least and greatest) fixed-point equations. We first formulate Proposition \ref{bellman} exposing a fixed-point property of the value of MBP's. Let us fix a MBP $\mathcal{M}\!=\!\langle (S,E), (S_{P}, B_\exists,B_\forall), p , Par \rangle$ with $S\!=\! \{s_1\dots s_n\}$. 
To improve readability we just write $val_i$ for $val(\mathcal{M},s_i)$ and we denote by $val$ the vector  $val\!=\!(val_i)_{1\leq i \leq n}$ of length $n$. The symbols $\sum$ and $\prod$ denote the usual operations of sum and product on reals. We also use a ``coproduct'' operation defined as  $\coprod_{i\in I}x_i \!= \!  1-\prod_{i\in I }1-x_i$. 

\begin{proposition}\label{bellman}
  The equality $val \!=\! f(val)$ holds, where  $f\!:\![0,1]^n \!\rightarrow\! [0,1]^n$ is:
%
\begin{center}
$
\big(f\begin{pmatrix} 
x_1\\ 
\vdots\\
x_n
\end{pmatrix}\big)_i = \left\{\begin{array}{l  l}      
								\displaystyle 	\sum_{\{ j \ \mid \ (s_i,s_j)\in E\}} p(s_i)(s_j) \cdot x_j \ \ & \textnormal{if }\ s_i\!\in\! S_{P} \vspace{5pt} \\
								\displaystyle \prod_{\{ j\  \mid \ (s_i,s_j)\in E\}} x_j & \textnormal{if } \  s_i\!\in\! B_{\forall} \vspace{5pt} \\
								\displaystyle \coprod_{\{ j\  \mid\  (s_i,s_j)\in E\}} x_j & \textnormal{if } \  s_i\!\in\! B_{\exists}
                      	      \end{array}      \right.
$
\end{center}
\end{proposition}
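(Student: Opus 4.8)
The plan is to prove the three cases uniformly by a one–step conditioning argument on the stochastic process that generates branching plays from $s_i$, exploiting the Markov property of $\mu_{\mathcal{M}}$; this is the branching–play analogue of the classical Bellman equation for Markov chains.

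First I would record the combinatorial decomposition of $\mathcal{BP}(\mathcal{M},s_i)$ according to the type of $s_i$. A branching play $T$ from $s_i$ has root labelled $s_i$, and: if $s_i\in S_P$, the root has a single child $y$, so $T$ is determined by the successor $s_j$ labelling $y$ together with the subtree $T_y$, which is itself a branching play from $s_j$; if $s_i\in B_\forall$ or $s_i\in B_\exists$ with successors $s_{j_1},\dots,s_{j_m}$, then $T$ is determined by the tuple $(T_{y_1},\dots,T_{y_m})$ of subtrees hanging below its children, with $T_{y_k}\in\mathcal{BP}(\mathcal{M},s_{j_k})$. The description of the process (equivalently, Definition 4.7 of \cite{MIO2012b}) says exactly that these decompositions are measure isomorphisms: in the probabilistic case the distribution of $(s_j,T_y)$ is ``pick $s_j$ with probability $p(s_i)(s_j)$, then draw $T_y$ from $\mu_{\mathcal{M}}$ started at $s_j$'', while in the branching case the subtrees $T_{y_1},\dots,T_{y_m}$ are mutually independent with $T_{y_k}$ distributed as $\mu_{\mathcal{M}}$ started at $s_{j_k}$.

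Next I would identify the winning set $\mathcal{W}_{s_i}$ under these decompositions. The key remark is that the root of $T$ is visited exactly once in any play of $\mathcal{G}(T)$, so $Par(s_i)$ never lies in $\infty(\Pi)$ and the winner of $\mathcal{G}(T)$ depends only on what happens strictly below the root. If $s_i\in S_P$ the game moves automatically from the root to $y$, so $\exists$ wins $\mathcal{G}(T)$ iff $\exists$ wins $\mathcal{G}(T_y)$; hence $\mathcal{W}_{s_i}$ corresponds to $\{(s_j,T')\mid T'\in\mathcal{W}_{s_j}\}$. If $s_i\in B_\forall$, the first move of $\mathcal{G}(T)$ is a choice by Player $\forall$ of one child $y_k$, after which the game is literally $\mathcal{G}(T_{y_k})$; combining winning strategies on the $T_{y_k}$ yields a winning strategy on $T$, and conversely restricting a winning strategy on $T$ to the plays beginning with the move to $y_k$ yields one on $T_{y_k}$, so $\exists$ wins $\mathcal{G}(T)$ iff $\exists$ wins every $\mathcal{G}(T_{y_k})$; thus $\mathcal{W}_{s_i}$ corresponds to the product event $\{(T_1,\dots,T_m)\mid \forall k,\ T_k\in\mathcal{W}_{s_{j_k}}\}$. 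The case $s_i\in B_\exists$ is dual, with ``$\forall k$'' replaced by ``$\exists k$''. Note that no determinacy of $\mathcal{G}(T)$ is invoked here, only that after one move the game splits off into a single subgame on a subtree.

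Finally I would push these identifications through the measure, using the factorizations from the second step. In the probabilistic case $\mu_{\mathcal{M}}(\mathcal{W}_{s_i})=\sum_{\{j\mid (s_i,s_j)\in E\}} p(s_i)(s_j)\,\mu_{\mathcal{M}}(\mathcal{W}_{s_j})=\sum_{\{j\mid (s_i,s_j)\in E\}} p(s_i)(s_j)\,val_j$, which is $f(val)_i$. In the $\forall$–branching case, independence of the subtrees gives $\mu_{\mathcal{M}}(\mathcal{W}_{s_i})=\prod_{\{j\mid (s_i,s_j)\in E\}} \mu_{\mathcal{M}}(\mathcal{W}_{s_{j}})=\prod_{\{j\mid (s_i,s_j)\in E\}} val_{j}=f(val)_i$. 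In the $\exists$–branching case, again by independence, $\mu_{\mathcal{M}}(\mathcal{W}_{s_i})=1-\prod_{\{j\mid (s_i,s_j)\in E\}}\big(1-\mu_{\mathcal{M}}(\mathcal{W}_{s_{j}})\big)=\coprod_{\{j\mid (s_i,s_j)\in E\}} val_{j}=f(val)_i$; here one also uses that $\mathcal{W}_{s_i}$, being a finite union (resp.\ intersection) of the measurable events $\{T_k\in\mathcal{W}_{s_{j_k}}\}$, is itself measurable, which is already known. The step I expect to demand the most care is the identification of $\mathcal{W}_{s_i}$ in the branching cases: making fully precise that $\mathcal{G}(T)$ decomposes into the subgames $\mathcal{G}(T_{y_k})$ after the first move, that winning strategies restrict and recombine correctly, and that the measure isomorphism of the second step indeed carries $\mathcal{W}_{s_i}$ to the product/union event described.
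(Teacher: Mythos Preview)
Your argument is correct and follows essentially the same approach as the paper's proof: a one--step decomposition of branching plays combined with the observation that at a probabilistic state the value is a weighted average, while at a $\forall$-- (resp.\ $\exists$--) branching state the \emph{independence} of the subtrees turns ``$\exists$ wins all (resp.\ some) subgame'' into a product (resp.\ coproduct) of the successor values. The paper only sketches this idea and defers the formal details to Theorem~4.22 of \cite{MIO2012b}, so your proposal is in fact more explicit than the paper's own proof about the measure isomorphism, the irrelevance of the root priority, and the restriction/recombination of winning strategies.
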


\begin{proof} 
Here we sketch the main idea of the argument, for a formal proof, see Theorem 4.22 of \cite{MIO2012b}.
If $s_i$ is a probabilistic state, then $val_i$ is the weighted average of the value of its successors, since the stochastic process associated with the MBP chooses a unique successor $s_j$ of $s_i$ with probability $p(s_i)(s_j)$.  If $s_i$ is a $\forall$-branching state, then $val_i$ is the probability that all \emph{independently} generated subtrees are winning for Player $\exists$ and this is captured by the $\prod$ expression. Similarly, if $s_i$ is a $\exists$-branching state then $val_i$ is the probability that at least one generated subtree is winning for Player $\exists$, as formalized by the $\coprod$ expression.
Hence the vector $val$ is one of the fixed-points of the function $f\!:\![0,1]^n \!\rightarrow\! [0,1]^n$.
\end{proof}
Theorem \ref{main_theorem_MBP} below 
refines Proposition \ref{bellman} by identifying $val$ as the unique vector satisfying a system of nested (least and greatest) fixed-point equations. Its formulation closely follows the notation adopted in the textbook \cite[\S 4.3]{Rudiments2001} for presenting a similar result valid for ordinary parity games. To adhere to such notation, we will define a function $g$, a variant of the function $f$ presented above. Let $k\!=\!\max\{ Par(s) \mid s\!\in\! S\}$ and $l \!=\!\min\{ Par(s) \mid s\!\in\! S\}$  be the maximal and minimal priorities used in the MBP, respectively, and let $c\!=\!k-l+1$. 

\begin{definition}\label{definition_of_g}
The function $g\!:\! ([0,1]^n)^{c} \rightarrow [0,1]^n$ is defined as follows:
\begin{center}
$
\big(g\begin{pmatrix} 
x^l_1\\ 
\vdots\\
x^l_n
\end{pmatrix},\dots, \begin{pmatrix} 
x^k_1\\ 
\vdots\\
x^k_n
\end{pmatrix} \big)_i = \left\{      \begin{array}{l  l}      

									\displaystyle 	\sum_{\{ j \ \mid \ (s_i,s_j)\in E\}} p(s_i)(s_j) \cdot x^{Par(s_j)}_j \ \ & \textnormal{if }  s_i\!\in\! S_{P}\\ \vspace{5pt}
						
								\displaystyle \prod_{\{ j\  \mid \ (s_i,s_j)\in E\}} x^{Par(s_j)}_j & \textnormal{if } \  s_i\!\in\! B_{\forall}\\ \vspace{5pt}

								\displaystyle \coprod_{\{ j\  \mid\  (s_i,s_j)\in E\}} x^{Par(s_j)}_j & \textnormal{if } \  s_i\!\in\! B_{\exists}\\ 

                      	      		     \end{array}      \right.
$
\end{center}
\end{definition}

The function $g$ depends, like the function $f$,  only on $n$ variables $\{ x^{Par(s_1)}_1, \dots, x^{Par(s_n)}_n\}$ appearing in the body of its definition. The input of $g$ can indeed be regarded as the input of $f$ divided into $c$ baskets, where each variable $x_i$ is put in the basket corresponding  to the priority of $s_i$, for $1\!\leq \! i \!\leq\! n$.

The set $[0,1]^n$, equipped with the pointwise order defined as \[(x_1,\dots,x_n)\!\leq\!(y_1,\dots,y_n) \Leftrightarrow \forall i. (x_i\leq y_i),\] is a complete lattice and the function $g$ is monotone with respect to this order in each of its arguments. Hence the Knaster--Tarski theorem ensures the existence of least and greatest points.  We are now ready to state the main result regarding the values of a given MBP. We adopt standard $\mu$-calculus notation (see, e.g., \cite{Rudiments2001} and \cite{MioThesis,MIO2012b}) to express systems of least and greatest fixed-points equations.
\begin{theorem}[{\cite[Theorem 6.4.2]{MioThesis}}]\label{main_theorem_MBP}
\label{corr1}
The following equality holds:\footnote{Theorem 6.4.2 of \cite{MioThesis} actually proves a stronger result valid for arbitrary $2\frac{1}{2}$-player meta-parity games whereas, as mentioned in the beginning of this section, Markov branching plays are $2\frac{1}{2}$-player meta-parity games without Player $1$ and Player $2$ states. Also, Theorem 6.4.2 of \cite{MioThesis} is stated assuming the validity of the set-theoretic axiom $\textnormal{MA}_{\aleph_1}$, but as shown in \cite{GMMS2014}, such assumption is not necessary and can thus be dropped. 
}

\[ 
\begin{pmatrix} 
val_1\\ 
\vdots\\
val_n
\end{pmatrix}
=
\theta_k
\begin{pmatrix} 
x^k_1\\ 
\vdots\\
x^k_n
\end{pmatrix}.
\cdots
.\theta_l
\begin{pmatrix} 
x^l_1\\ 
\vdots\\
x^l_n
\end{pmatrix}.
g(\begin{pmatrix} 
x^l_1\\ 
\vdots\\
x^l_n
\end{pmatrix},
\ldots,
\begin{pmatrix} 
x^k_1\\ 
\vdots\\
x^k_n
\end{pmatrix}) \]
where $\theta_i$, for $l \!\leq\! i\! \leq\! k$ is a least-fixed point operator ($\mu$) if $i$ is an odd number and a greatest-fixed point operator if ($\nu$) if $i$ is even.
\end{theorem}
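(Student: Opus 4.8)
The plan is to reduce the statement to the known fixed-point characterization of values of ordinary two-player stochastic parity games (the result recorded as \cite[\S 4.3]{Rudiments2001}), lifted to the branching setting. Recall that Proposition \ref{bellman} already identifies $val$ as \emph{a} fixed point of $f$; the content of Theorem \ref{main_theorem_MBP} is that among all solutions of that Bellman-type system, $val$ is precisely the one selected by the nested $\mu/\nu$ pattern dictated by the priorities. So the real work is a \emph{uniqueness} argument: show that the vector extracted from the nested fixed-point expression on the right-hand side coincides with the probabilistic value $\mu_{\mathcal{M}}(\mathcal{W}_s)$.

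First I would unfold the nested fixed-point expression into its usual approximant/unfolding form. Writing $\theta_k x^k.\cdots.\theta_l x^l.\,g(x^l,\dots,x^k)$, the outermost operator $\theta_k$ (which is $\nu$ if $k$ is even, $\mu$ if $k$ is odd) is approximated transfinitely, and at each stage one recursively computes the inner fixed points; this is exactly the signature-function / parity-progress construction used for ordinary parity games. The key observation is that each unfolding step corresponds to one step of the stochastic process generating a branching play $T$: expanding a probabilistic vertex produces the weighted-average clause, expanding an $\exists$-branching (resp. $\forall$-branching) vertex produces the $\coprod$ (resp. $\prod$) clause, and the priority of the successor state determines \emph{which} basket $x^{Par(s_j)}_j$ is consulted — which is precisely the role of $g$ as opposed to $f$. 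Thus the $m$-fold unfolding of the expression, evaluated at the appropriate approximant ordinals, computes the $\mu_{\mathcal{M}}$-probability that Player $\exists$ wins restricted to branching plays in which the relevant "parity progress measure" stabilizes within $m$ steps.

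The substantive step is then to show that the limit of these approximants equals $\mu_{\mathcal{M}}(\mathcal{W}_s)$, i.e.\ that nothing is lost in the limit. For this I would invoke measurability of $\mathcal{W}_s$ (guaranteed by \cite{GMMS2014}, as already noted after the Definition of the value of a MBP) together with a positional-determinacy argument for the games $\mathcal{G}(T)$ on individual branching plays: since each $\mathcal{G}(T)$ is a parity game it is determined with positional strategies, so membership $T\in\mathcal{W}_s$ can be witnessed by a strategy whose acceptance depends only on the eventual maximal priority seen along each branch, which is exactly what the nested $\mu/\nu$ alternation tracks. One direction — that the fixed-point value is a lower bound for $val$ — follows by exhibiting, from a strategy achieving the fixed-point approximant, a measurable family of strategies in the games $\mathcal{G}(T)$; the other direction — that it is an upper bound — follows dually by pushing optimal counter-strategies for Player $\forall$ through the $\prod$ clauses. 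In both directions the branching (rather than linear) shape of $T$ forces an appeal to independence of subtrees, i.e.\ to the product structure of $\mu_{\mathcal{M}}$, which is precisely what makes the $\prod$ and $\coprod$ operators the correct branching analogues of weighted sums.

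The main obstacle, and the reason the detailed proof is deferred to \cite{MioThesis}, is the limit/continuity argument at the transfinite approximant stages: one must show that the $\mu$-approximants (for odd priorities) converge from below and the $\nu$-approximants (for even priorities) converge from above to the actual probabilistic value, uniformly enough that the alternation of limits commutes with integration over the space $\mathcal{BP}(\mathcal{M},s)$. This is where the measurability result of \cite{GMMS2014} is indispensable — without it the sets of branching plays cut out by intermediate approximants need not be measurable and the inductive bookkeeping collapses. Since a complete treatment would essentially reproduce Chapter 6 of \cite{MioThesis}, I would, as the authors do, state the theorem with a pointer to that source and content myself here with the approximant-unfolding sketch and the reduction to positional determinacy of the per-branching-play games $\mathcal{G}(T)$.
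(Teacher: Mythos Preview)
Your sketch and the paper's both ultimately defer the full argument to \cite{MioThesis}, so at the level of what is actually written there is little to compare. The paper's two-sentence hint names two ingredients: (i) induction on the number of priorities in $\mathcal{M}$, and (ii) transfinite induction on a \emph{rank function defined on the space of branching plays}. Your plan recovers (i) through the outer-to-inner unfolding of the nested $\theta_k\cdots\theta_l$ expression, and your transfinite-approximant machinery is the formula-side counterpart of (ii). Where you diverge is in the auxiliary tools: you lean on positional determinacy of the individual parity games $\mathcal{G}(T)$ together with the product/independence structure of $\mu_{\mathcal{M}}$ to push strategies through the $\prod$ and $\coprod$ clauses, whereas the cited proof works directly with an ordinal-valued rank function on branching plays themselves rather than via per-play strategies. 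Both routes are viable; yours is closer in spirit to the signature/progress-measure treatment of ordinary parity games, while the rank-function route stays intrinsic to the MBP semantics. One small caution: your opening line about ``reducing to the known fixed-point characterization of values of ordinary two-player stochastic parity games'' overstates matters --- the presence of branching states means there is no such direct reduction, as you yourself acknowledge later when appealing to independence of subtrees.
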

\begin{proof}
The proof goes by induction on the number of priorities in the MBP $\mathcal{M}$ and by transfinite induction on a rank-function defined on the space of branching plays.  See \cite{MioThesis} for a detailed proof.
\end{proof}

\noindent

The next theorem states that the value of a MBP is computable and is always a vector of algebraic numbers. 
The examples discussed in Section \ref{examples_section} will illustrate the applicability of this result.

\begin{theorem}\label{computability_of_MBP}
Let $\mathcal{M}$ be a MBP. Then for each state $s_i$ of $\mathcal{M}$ the value $val_i$ is computable and is an algebraic number.\end{theorem}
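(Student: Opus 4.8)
The plan is to combine Theorem~\ref{main_theorem_MBP} with the effective quantifier elimination for real-closed fields due to Tarski \cite{Tarski1951}. By Theorem~\ref{main_theorem_MBP} the vector $val\in[0,1]^n$ is the solution of the system of nested least/greatest fixed-point equations built from the function $g$ of Definition~\ref{definition_of_g}. The point to exploit is that $g$ is obtained by applying the operations $\sum$, $\prod$ and $\coprod$ (recall $\coprod_{i}x_i = 1-\prod_i(1-x_i)$) to its variables and to the rational constants $p(s_i)(s_j)$; hence $g$ is a polynomial map with rational coefficients, and in particular its graph $\{(X,Y)\in([0,1]^n)^2 : Y=g(X)\}$ is definable by a quantifier-free first-order formula in the language $\langle +,\cdot,\leq,0,1\rangle$ of ordered fields.

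First I would prove, by induction on the number $c$ of nested fixed-point operators, that the solution of such a system is definable by a first-order formula $\phi_{\mathcal{S}}(x_1,\dots,x_n)$ over $\langle +,\cdot,\leq,0,1\rangle$ with rational parameters, and that $val$ is its unique real solution. The induction step eliminates one fixed-point operator: if $h\colon [0,1]^n\times P\to[0,1]^n$ is monotone in its first argument and has first-order definable graph (with $p$ ranging over a semialgebraic parameter space $P$), then $p\mapsto \mu X.h(X,p)$ and $p\mapsto \nu X.h(X,p)$ again have first-order definable graphs, since by the Knaster--Tarski theorem the least fixed point coincides with the least pre-fixed point and the greatest with the greatest post-fixed point, so
\[
Y=\mu X.h(X,p)\ \Leftrightarrow\ h(Y,p)\leq Y\ \wedge\ \forall Z\,\bigl(h(Z,p)\leq Z\to Y\leq Z\bigr),
\]
and dually for $\nu$. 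Applying this $c$ times, innermost block first, to $g$ --- whose monotonicity in each variable-block was already noted after Definition~\ref{definition_of_g} --- produces $\phi_{\mathcal{S}}$; uniqueness of $val$ is automatic, as each $\mu$/$\nu$ operator yields a uniquely determined fixed point and Theorem~\ref{main_theorem_MBP} identifies $val$ with the resulting nested fixed point.

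Then I would invoke Tarski's theorem: real-closed fields admit an effective quantifier-elimination procedure, so from $\phi_{\mathcal{S}}$ one computes an equivalent quantifier-free formula $\psi_{\mathcal{S}}(x_1,\dots,x_n)$, a Boolean combination of polynomial (in)equalities with integer coefficients, still having $val$ as its unique real solution. For each coordinate, $\{val_i\}=\{x_i\in\mathbb{R} : \exists x_1\cdots\exists x_{i-1}\exists x_{i+1}\cdots\exists x_n\ \psi_{\mathcal{S}}\}$ is a semialgebraic subset of $\mathbb{R}$, hence (after one further quantifier elimination) a finite union of points and open intervals whose endpoints and isolated points are roots of explicit integer polynomials; since this set is the singleton $\{val_i\}$, $val_i$ is a root of one of these polynomials, hence an algebraic number, and such a polynomial together with an isolating interval for $val_i$ is output by the same procedure (this is essentially the content of Algorithm~\ref{alg1}).

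The step I expect to be the main obstacle is the bookkeeping in the inductive elimination of the nested fixed points: one must keep straight that the ``parameters'' of an inner operator are precisely the variable-blocks controlled by the outer operators, that the monotonicity of $g$ in each block is exactly what licenses replacing least (greatest) fixed points by least pre-fixed (greatest post-fixed) points at every stage, and that the universal quantifier $\forall Z$ introduced at each stage interleaves with the $\mu$/$\nu$ alternation in such a way that the final $\phi_{\mathcal{S}}$ genuinely pins down $val$ uniquely. Once $\phi_{\mathcal{S}}$ is available, the appeal to Tarski's procedure and the passage from a $0$-dimensional semialgebraic set to algebraic numbers are entirely routine.
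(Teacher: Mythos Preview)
Your proposal is correct and follows essentially the same route as the paper's proof: translate the nested fixed-point system of Theorem~\ref{main_theorem_MBP} into a first-order formula over real-closed fields by inductively unwinding each $\mu/\nu$ operator, then apply Tarski's quantifier elimination to obtain a quantifier-free description from which the algebraic values are read off (this is exactly what the paper sketches, deferring the inductive unwinding to \cite{AM04,MioSimpsonFICS2013}, and what is carried out concretely in the examples of Section~\ref{sec:examples}). The only cosmetic difference is that you characterise $\mu X.h(X,p)$ as the least \emph{pre}-fixed point, whereas the paper's worked examples use the equivalent ``least among fixed points'' formulation; both are justified by Knaster--Tarski and lead to the same first-order translation.
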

\begin{proof} (sketch)
Using known ideas (see, e.g., Lemma 9 in \cite{AM04} and Proposition 4.1 in \cite{MioSimpsonFICS2013}) the unique vector $val\!=\!(val_1,\dots, val_n)$ satisfying the system of fixed-point expressions $\mathcal{S}$ given by Theorem \ref{main_theorem_MBP} can be computed by a reduction to the first-order theory of real closed fields. 
A first order formula $F(x_1,\dots,x_n)$, inductively defined from $\mathcal{S}$, is constructed with the property that $(val_1,\dots, val_n)\!\in\!\mathbb{R}^n$ is the unique vector of reals satisfying the formula $F(x_1,\dots,x_n)$. By Tarski's quantifier elimination procedure \cite{Tarski1951}, the formula $F(x_1,\dots,x_n)$ can be effectively reduced to an equivalent formula $G(x_1,\dots, x_n)$ without quantifiers, that is, to a Boolean combination of equations and inequalities between polynomials over $(x_1,\dots,x_n)$. It then follows that the  $(val_1,\dots, val_n)$, which can be extracted from $G$ with standard methods, is a vector of algebraic numbers.  In Section \ref{sec:examples} we apply the procedure described above to a number of examples.
\end{proof}

\begin{algorithm}[caption={computing the vector of values of a MBP.}, label={alg1}]
 input: $\textnormal{a Markov Branching Play } {\mathcal M}$. 
 output: $\textnormal{algebraic numbers }r_1,\ldots,r_n\in{\mathbb R} \textnormal{ equal to } (val_1,\ldots,val_n)$. 
 begin
   $\mathcal{S}\gets \textnormal{Generate system of fixed--point equations associated to }\mathcal{M}$
   $F(x_1,\ldots,x_n)\gets \textnormal{Rewrite }\mathcal{S} \textnormal{ to the corresponding first-order formula over\ } FO(\mathbb{R},<,0,1,+,\times) $
   $G(x_1,\ldots,x_n)\gets \textnormal{Apply quantifier elimination procedure to } F(x_1,\ldots,x_n)$
 return $\textnormal{the unique vector }(r_1,\dots, r_n) \textnormal{ satisfying } G(x_1,\dots, x_n)$ 
\end{algorithm}

\section{From Game Automata to Markov Branching Plays}
\label{automata2mbp}
In this section we present a reduction of the problem of computing the probability of regular languages definable by game automata  to the problem of computing the value of a given MBP, which is algorithmically solvable using Algorithm \ref{alg1}.

We now describe how to construct from a game automaton $\mathcal{A}\!=\! ( Q, q_0, \delta, \pi)$ over the alphabet $\Sigma$  a corresponding MBP $\mathcal{M}\!=\! \langle (S,E), (S_P,B_\exists, B_\forall),p,Par\rangle$. 
The set $S$ of states of $\mathcal{M}$ contains a probabilistic state $s_q$, for each $q\!\in\!Q$, a $\exists$--branching state $s_{q,a}$ for each pair $(q,a)$, with $q\!\in\!Q$ and $a\!\in\!\Sigma$, such that $\delta(q,a)\!=\!(L,q_L)\vee (R,q_r)$, and a $\forall$--branching state $s_{q,a}$ for each pair $(q,a)$ such that $\delta(q,a)\!=\!(L,q_L)\wedge (R,q_r)$. The transition relation $E$ is defined as follows:
\begin{itemize}
\item a probabilistic state $s_q$ has as successors the states $\{ s_{q,a} \mid a\!\in\! \Sigma\}$,
\item a $\exists$-branching (resp. $\forall$-branching) state $s_{q,a}$  have two successors $s_{q_1}$ and $s_{q_2}$ where $\delta(q,a)\!=\! (L,q_1)\vee (R,q_2)$ (resp. $\delta(q,a)\!=\! (L,q_1)\wedge (R,q_2)$).
 \end{itemize}
 Note that each state $s_q$, for $q\!\in\! Q$ has exactly $|\Sigma|$ successors and that each state $s_{q,a}$ has exactly\footnote{\label{foot_one} We are implicitly assuming, for the sake of simplicity, that each transition $(L,q_1)\!\wedge\! (R,q_2)$ and $(L,q_1)\!\vee\! (R,q_2)$ of $\delta$ in $\mathcal{A}$ is such that $q_1\!\neq\! q_2$,  and thus that $s_{q,a}$ has exactly two successors.  If necessary, the game-automaton $\mathcal{A}$ can be made satisfy this assumption by introducing additional copies of the states.} two successors. The assignment $p\!:\!S_P\rightarrow(S\rightarrow [0,1])$ is defined as assigning to each probabilistic state (i.e., state of the form $s_q$) a uniform distribution over its successors, that is, $p(s_q)(s_{q,a})\!=\!\frac{1}{|\Sigma|}$. Lastly, the parity assignment $Par\!:\! S\!\rightarrow\!\omega$ of the MBP $\mathcal{M}$ is defined as in the parity condition $\pi$ of the game automaton $\mathcal{A}$ by the mapping $Par(s_q)\!=\!Par(s_{q,a})\!=\! \pi(q)$.

As an illustrative example of this translation, consider the deterministic automaton $\mathcal{A}\!=\! \langle \{q_1,q_2\}, q_1, \delta, \pi\rangle$ over the alphabet $\Sigma\!=\!\{a,b,c\}$, with parity assignment $\pi(q_2)\!=\!2$, $\pi(q_1)\!=\!1$ and transition $\delta$ defined by $\delta(q_1,a)\!=\!\delta(q_2,a)\!=\! (L,q_2)\wedge (R,q_2)$ and $\delta(q_1,l)\!=\!\delta(q_2,l)\!=\! (L,q_1)\wedge (R,q_1)$, for $l\!\in\!\{a,b\}$.

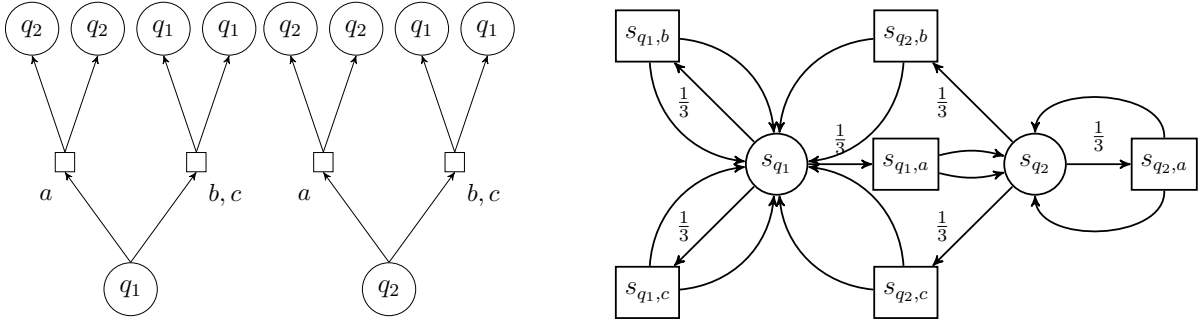
\begin{figure}[H]
\begin{tikzpicture}[scale=0.85,->,level distance=0.8in,sibling distance=.07in,>=stealth',thick]
\tikzset{grow'=up}
\Tree 
[.\node [draw, circle, minimum size = 2*\diameter] {$q_1$};
        \edge[{font=\small}] node[left = 0.3cm, near end] {$a$}; 
						    [.\node [draw, minimum size = 2*\smalldiameter] {};
						        \node [draw, circle, minimum size = \diameter] {$q_2$};
						        \node [draw, circle, minimum size = \diameter] {$q_2$};
						    ]
        \edge[{font=\small}] node[right = 0.3cm, near end] {$b,c$}; 
				[.\node [draw, minimum size = 2*\smalldiameter] {};
						        \node [draw, circle, minimum size = \diameter] {$q_1$};
						        \node [draw, circle, minimum size = \diameter] {$q_1$};
                ]
]

\pgflowlevelsynccm
\begin{scope}[shift={(4cm,0cm)}]
\Tree 
[.\node [draw, circle, minimum size = 2*\diameter] {$q_2$};
        \edge[{font=\small}] node[left = 0.3cm, near end] {$a$}; 
						    [.\node [draw, minimum size = 2*\smalldiameter] {};
						        \node [draw, circle, minimum size = \diameter] {$q_2$};
						        \node [draw, circle, minimum size = \diameter] {$q_2$};
						    ]
        \edge[{font=\small}] node[right = 0.3cm, near end] {$b,c$}; 
				[.\node [draw, minimum size = 2*\smalldiameter] {};
						        \node [draw, circle, minimum size = \diameter] {$q_1$};
						        \node [draw, circle, minimum size = \diameter] {$q_1$};
                ]
]
\end{scope}

\pgflowlevelsynccm
\begin{scope}[shift={(10cm,0cm)}]
\node (s1) [draw, circle, minimum size = 2*\diameter]  at ($(0,2)$) {$s_{q_1}$};
\node (sq1c) [draw, minimum size = 4*\diameter]  at ($(s1)+(-2,-2)$) {$s_{q_1,c}$};
\node (sq1b) [draw, minimum size = 4*\diameter]  at ($(s1)+(-2,2)$) {$s_{q_1,b}$};
\node (sq1a) [draw, minimum size = 4*\diameter]  at ($(s1)+(2,0)$) {$s_{q_1,a}$};
\node (s2) [draw, circle, minimum size = 2*\diameter]  at ($(s1)+(4,0)$) {$s_{q_2}$};
\node (sq2c) [draw, minimum size = 4*\diameter]  at ($(s2)+(-2,-2)$) {$s_{q_2,c}$};
\node (sq2b) [draw, minimum size = 4*\diameter]  at ($(s2)+(-2,2)$) {$s_{q_2,b}$};
\node (sq2a) [draw, minimum size = 4*\diameter]  at ($(s2)+(2,0)$) {$s_{q_2,a}$};
\path (s1) edge node[{font=\small},left = 0.2cm] {$\frac{1}{3}$} (sq1c);
\path (sq1c) edge [bend right=40] node {} (s1);
\path (sq1c) edge [bend left=40] node {} (s1);
\path (s1) edge [{font=\small},->,left = 0.2cm] node[left=0.2cm] {$\frac{1}{3}$} (sq1b);
\path (sq1b) edge [bend left=40] node {} (s1);
\path (sq1b) edge [bend right=40] node {} (s1);
\path (s1) edge [{font=\small},above = 0.05cm]  node {$\frac{1}{3}$} (sq1a);
\path (sq1a) edge [bend right=15] node {} (s2);
\path (sq1a) edge [bend left=15] node {} (s2);
\path (s2) edge [{font=\small},->,left = 0.2cm] node[left=0.2cm] {$\frac{1}{3}$} (sq2c);
\path (sq2c) edge [bend left=40] node {} (s1);
\path (sq2c) edge [bend right=40] node {} (s1);
\path (s2) edge node[{font=\small},left = 0.2cm] {$\frac{1}{3}$} (sq2b);
\path (sq2b) edge [bend right=40] node {} (s1);
\path (sq2b) edge [bend left=40] node {} (s1);
\path (s2) edge node[{font=\small},above = 0.05cm] {$\frac{1}{3}$} (sq2a);
\path (sq2a) edge [bend left=90] node {} (s2);
\path (sq2a) edge [bend right=90] node {} (s2);
\end{scope}

\pgflowlevelsynccm
\end{tikzpicture}

\caption{Transitions of the game automaton $\mathcal{A}$ and corresponding MBP $\mathcal{M}$.}
\label{gvv_3}
\end{figure}
The corresponding MBP $\mathcal{M}$ is schematically\footnote{\label{foot_two}Due to the chosen succinct definition, the automaton $\mathcal{A}$ does not satisfy the assumption of Footnote \ref{foot_one}. Rather than formally introducing copies $q_1$ and $q_2$ in $\mathcal{A}$, we have simply depicted all $\forall$-branching states of $\mathcal{M}$ as having two successors.} depicted in Figure \ref{gvv_3} (right), 
by representing probabilistic states with circles, $\forall$-branching states with boxes and  the probabilistic assignment $p$ by the probabilities labeling the outgoing edges of probabilistic states.
The soundness of our reduction is stated as follows.
\begin{theorem}[Correctness of Reduction]\label{thm:correctness}
Let $L$ be a regular language recognized by a game automaton $\mathcal{A}$ and let $\mathcal{M}$ be the MBP corresponding to $\mathcal{A}$. Then $\mu(L)\!=\! Val(\mathcal{M},s_{q_0})$, where $q_0$ is the initial state of $\mathcal{A}$. 
\end{theorem}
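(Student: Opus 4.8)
The plan is to produce a measure-preserving Borel isomorphism $\Theta$ between the probability space $(\trees{\Sigma},\mu)$ and the space $(\mathcal{BP}(\mathcal{M},s_{q_0}),\mu_{\mathcal{M}})$ of branching plays of $\mathcal{M}$ from $s_{q_0}$, under which the language $L$ corresponds exactly to the $\exists$-winning set $\mathcal{W}_{s_{q_0}}$. Granting this, the theorem follows immediately: $\mu(L)=\mu(\Theta^{-1}(\mathcal{W}_{s_{q_0}}))=\mu_{\mathcal{M}}(\mathcal{W}_{s_{q_0}})=Val(\mathcal{M},s_{q_0})$, the middle equality because $\Theta_*\mu=\mu_{\mathcal{M}}$ and the last by the definition of the value of a MBP; both $L$ and $\mathcal{W}_{s_{q_0}}$ are measurable by the results of \cite{GMMS2014} cited above.

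First I would define $\Theta$. Given $t\in\trees{\Sigma}$, let $\rho_t\colon\{L,R\}^*\to Q$ be the state-labelling obtained by running $\mathcal{A}$ on $t$, i.e. $\rho_t(\epsilon)=q_0$ and, writing $\delta(\rho_t(\vv{x}),t(\vv{x}))=(L,q_L)\bullet(R,q_R)$ with $\bullet\in\{\vee,\wedge\}$ (legitimate since $\mathcal{A}$ is a game automaton), $\rho_t(\vv{x}.L)=q_L$ and $\rho_t(\vv{x}.R)=q_R$. Let $\Theta(t)$ be the branching play that, for each $\vv{x}\in\{L,R\}^*$, contains a node $u_{\vv{x}}$ labelled $s_{\rho_t(\vv{x})}$, its unique child $w_{\vv{x}}$ labelled $s_{\rho_t(\vv{x}),\,t(\vv{x})}$, and the two children $u_{\vv{x}.L},u_{\vv{x}.R}$ of $w_{\vv{x}}$. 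One checks that $\Theta(t)$ is indeed a branching play of $\mathcal{M}$ from $s_{q_0}$ (every probabilistic node has one child and every branching node $w_{\vv{x}}$ has exactly the two successors of $s_{\rho_t(\vv{x}),t(\vv{x})}$, using the assumption of Footnote~\ref{foot_one}), and that $\Theta$ is a bijection: from a branching play one recovers $t(\vv{x})$ as the label of the unique child of the node encoding vertex $\vv{x}$, and one distinguishes the $L$- and $R$-subtrees below a branching node because its two successors carry distinct labels (again Footnote~\ref{foot_one}). Taking as basic clopen sets of $\trees{\Sigma}$ the cylinders that fix $t$ on $\{L,R\}^{\le m}$, and as basic clopen sets of $\mathcal{BP}(\mathcal{M},s_{q_0})$ the prefixes of branching plays that fix exactly those nodes $u_{\vv{x}},w_{\vv{x}}$ with $|\vv{x}|\le m$, the map $\Theta$ carries one basis onto the other and is therefore a homeomorphism; moreover each such cylinder, on either side, is pinned down by the same $|\{L,R\}^{\le m}|=2^{m+1}-1$ independent uniform label-choices and hence has measure $(1/|\Sigma|)^{2^{m+1}-1}$. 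By uniqueness of the measure determined by a clopen basis, $\Theta_*\mu=\mu_{\mathcal{M}}$.

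Next I would verify the crucial equivalence: for every $t$, Player $\exists$ has a winning strategy in the acceptance game $\mathcal{A}(t)$ iff Player $\exists$ has one in $\mathcal{G}(\Theta(t))$. The point is that $\mathcal{G}(\Theta(t))$ is, up to relabelling, obtained from $\mathcal{A}(t)$ by contracting the no-choice positions $\langle\vv{x},(L,q)\rangle$ and $\langle\vv{x},(R,q)\rangle$: position $\langle\vv{x},\rho_t(\vv{x})\rangle$ corresponds to the auto-progressing probabilistic node $u_{\vv{x}}$; position $\langle\vv{x},\delta(\rho_t(\vv{x}),t(\vv{x}))\rangle$ corresponds to $w_{\vv{x}}$ and is controlled by $\exists$ exactly when $\delta(\rho_t(\vv{x}),t(\vv{x}))$ is a disjunction, i.e. exactly when $w_{\vv{x}}\in B_\exists$; and the two moves available there lead to $u_{\vv{x}.L}$ and $u_{\vv{x}.R}$ in both games. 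The winning conditions agree as well: in $\mathcal{A}(t)$ only configurations $\langle\vv{x},q\rangle$ contribute priorities (namely $\pi(q)$), whereas in $\mathcal{G}(\Theta(t))$ the nodes $u_{\vv{x}}$ and $w_{\vv{x}}$ both carry priority $Par(s_{\rho_t(\vv{x})})=Par(s_{\rho_t(\vv{x}),t(\vv{x})})=\pi(\rho_t(\vv{x}))$, so along corresponding plays the set of priorities seen infinitely often is the same and so is its parity. Hence strategies transfer verbatim in both directions, giving $t\in L\iff\Theta(t)\in\mathcal{W}_{s_{q_0}}$, i.e. $L=\Theta^{-1}(\mathcal{W}_{s_{q_0}})$; combined with the previous paragraph this yields $\mu(L)=\mu_{\mathcal{M}}(\mathcal{W}_{s_{q_0}})=Val(\mathcal{M},s_{q_0})$.

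The part that needs the most care — though it is routine rather than deep — is the measure-theoretic bookkeeping in the second paragraph: choosing bases of clopen sets on the two spaces that $\Theta$ matches up and checking that the measures agree on them, so that $\Theta$ is genuinely a measure isomorphism and not merely a measurable bijection. By contrast, the game equivalence of the third paragraph, while it must be spelled out position by position, presents no real obstacle once one observes that a game automaton performs exactly one player-choice per tree vertex, faithfully mirrored by the single branching state $s_{q,a}$ lying below each probabilistic state $s_q$ in $\mathcal{M}$.
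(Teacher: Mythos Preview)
Your proposal is correct and follows essentially the same approach as the paper's own (sketch) proof: establish a bijection between $\Sigma$-trees and branching plays of $\mathcal{M}$ from $s_{q_0}$, check that it identifies the coin-flipping measure with $\mu_{\mathcal{M}}$, and verify that acceptance by $\mathcal{A}$ corresponds to membership in $\mathcal{W}_{s_{q_0}}$. You simply supply the details (explicit construction of $\Theta$, clopen-basis argument for measure preservation, position-by-position game equivalence) that the paper's sketch leaves implicit.
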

\begin{proof} (sketch)
Since each probabilistic state has exactly one successor for every letter $a\!\in\! \Sigma$ and each branching state have precisely two successors, there exists a one-to-one correspondence between $\Sigma$-trees $t\!\in\!\trees{\Sigma}$ and branching plays $T\!\in\! \mathcal{BP}(\mathcal{M},s_{q_0})$. Furthermore, it follows directly from the definition of acceptance by $\mathcal{A}$ (see Section \ref{alt_aut}) and  the definition  of the set $\mathcal{W}_{s}$ (see Section \ref{metaparity_sec}) that $t$ is accepted by $\mathcal{A}$ if and only if the corresponding branching play $T$ is in $\mathcal{W}_s$. Lastly, due to the uniform assignment $p$ of probabilities in $\mathcal{M}$, the coin-flipping measure $\mu$ on $\trees{\Sigma}$ and the probability measure $\mu_{\mathcal{M}}$ on  $\mathcal{BP}(\mathcal{M},s_{q_1})$ are identical.
\end{proof}
The result of Theorem \ref{thm:compuGame} in the Introduction then follows as a corollary of Theorem \ref{thm:correctness} above and the fact that the vector of values of a MBP can be computed using Algorithm \ref{alg1}. The final algorithm for computing the probability of regular languages definable by game automata is then as follows.

\begin{algorithm}[caption={computing the probability of regular languages $L$ recognized by game automata.}, label={alg2}]
 input: $\textnormal{a game automaton }\mathcal{A}\!=\! ( Q, q_0, \delta, \pi)$ recognizing a language $L$.
 output: $\textnormal{ a real number corresponding to }\mu(L)$.
 begin
  $\mathcal{M} \gets \textnormal{ Construct the MBP } \mathcal{M} \textnormal{ corresponding to }\mathcal{A}$
  $(val_1,\dots, val_n) \gets \textnormal{Apply Algorithm \ref{alg1} to compute the vector of values of the states of }\mathcal{M}$
 return  $\textnormal{the value }val_{i}\textnormal{ where } i \textnormal{ is the index of the probabilistic state }s_{q_0}\textnormal{ of }\mathcal{M}$
 
\end{algorithm}

\section{Examples}
\label{examples_section}
\label{failvv}
\label{sec:examples}
\label{fail_vv}

In this section we will apply Algorithm \ref{alg2} to analyze examples which will prove Propositions \ref{prop_example_1}, \ref{prop_example_2} and \ref{prop_example_3} stated in the Introduction. In some instances, in order to perform the quantifier elimination procedure required by Algorithm \ref{alg1}, we use the tool 
{\tt qepcad} \cite{tool:qepcad}.

We fix the alphabet $\Sigma\!=\!\{a,b,c\}$ and, for each $n\!\in\!\omega$, we define the regular language $L_n\!\subseteq\!\trees{\Sigma}$ as $L_n = \{ t\! \in\!\trees{\Sigma} \mid \textnormal{$a$ appears $\geq n$ times on every branch of $t$}\}$ and the language $L_{\infty}$ as $L_{\infty}\!=\! \bigcap_{n\in\omega} L_n$, i.e., as the set of $\Sigma$-trees having, on every branch, infinitely many occurrences of the letter $a$. 

\subsection{An introductory example} 
The language $L_1$ is recognized by the deterministic automaton in Figure \ref{gvv_1} (left) defined as $\mathcal{A}_1\!=\! \langle\{q_1,\top\},q_1,\delta_1,\pi\rangle$ where $\top$ is an accepting sink state (see Section \ref{alt_aut} for automata--related definitions), the priority assignment is $\pi(q_1)\!=\!1$ and the transition function $\delta_1$ is defined on $q_1$ as $\delta_1(q_{1},a)\!=\! (L,\top)\wedge(R,\top)$ and $\delta_1(q_{1},l)\!=\! (L,q_1)\wedge(R,q_1)$ for $l\!\in\!\{b,c\}$.
\input{pictures/gvv_picture1} 
We will compute the probability $\mu(L_1)$ using the procedure of Algorithm \ref{alg2}. As a first step we construct the MBP $\mathcal{M}_1$ corresponding to $\mathcal{A}_1$, as specified in Section \ref{automata2mbp}. In order to improve readability, we have represented in Figure \ref{gvv_1} (center)  a simplified version of $\mathcal{M}_1$ where the states $s_{\top}$, $s_{\top,a}$, $s_{\top,b}$ and $s_{\top,c}$ have been identified with the single state $s_{q_1,a}$. This is convenient since, clearly, all of these states have value $1$. Accordingly, the MBP $\mathcal{M}_1$ has four states, all of priority $1$.  Following the procedure of Algorithm \ref{alg2} we need to compute the values of the states of $\mathcal{M}_1$ using Algorithm \ref{alg1}. In accordance with Theorem \ref{main_theorem_MBP}, the fixed-point equation characterizing the vector $val\!=\!(val_{s_{q_1}}, val_{s_{q_1},a}, val_{s_{q_1},b}, val_{s_{q_1},c})$ of values of the states of $\mathcal{M}_1$ is $val \!=\! \mu \vec{x}.g(\vec{x})$, where $g$ is defined as in Figure  \ref{gvv_1} (right). Then $val_{s_{q_1}}$ is the least solution in $[0,1]$ of the equation $x\!=\! \frac{1}{3} + \frac{2}{3}x^{2}$. As it is simple to verify, even without running the solver based on Tarski's quantifier elimination procedure, the solution is $val_{s_{q_1}}\!=\!\frac{1}{2}$, and this is the output returned by Algorithm \ref{alg2}. Hence the probability of $L_1$ is  $\mu(L_1)\!=\!\frac{1}{2}$.


\subsection{Examples of regular languages having irrational probabilities} 
\label{subsection:irrational} This subsection constitutes a proof of Proposition \ref{prop_example_1}. The automaton $\mathcal{A}_2$ recognizing the language $L_2$  is defined as $\mathcal{A}_2\!=\! \langle (\{q_1,q_2, \top\}, q_2, \delta_2,\pi)$ where $q_2$ is the initial state, the priority function is defined as $\pi(q_1)\!=\! \pi(q_2)\!=\! 1$ and the transition function $\delta_2$ is defined on $q_1$ as the function $\delta_1$ of the previous example, and on the state $q_2$ as $\delta(q_2,a)\!=\! (L,q_1)\wedge (R,q_1)$ and $\delta_2(q_2,l)\!=\! (L,q_2)\wedge (R,q_2)$, for $l\!\in\!\{b,c\}$. The transition $\delta_2$ is shown in Figure \ref{gvv_2} (left).
\input{pictures/gvv_picture2}
The MBP $\mathcal{M}_2$ corresponding to $\mathcal{A}_2$ extends the MBP $\mathcal{M}_1$ of the previous example with the probabilistic state $s_{q_2}$ and the three $\forall$-branching states $s_{q_2,a}$, $s_{q_2,b}$ and $s_{q_2,c}$. The new part of the automaton $\mathcal{A}_2$ is depicted in Figure \ref{gvv_2} (center). Noting the four new states are not reachable by the other states already present in $\mathcal{M}_1$, we already know that $val_{s_{q_1}}\!=\! \frac{1}{2}$. Hence we can consider the simplified system of fixed-point equations $\mu \vec{x}.g(\vec{x})$ for calculating the values $val\!=\!(val_{q_2}, val_{q_{2},a}, val_{q_{2},b}, val_{{q_2},c})$ where $g$ is defined in Figure \ref{gvv_2} (right).
Hence the value $val_{q_2}$ is the least solution in $[0,1]$ of the equation $x = \frac{1}{12}+ \frac{2}{3}x^2$ and this is $val_{q_2}\!=\!\frac{1}{4}(3-\sqrt{7})$ which is irrational and approximately equal to $0.088$.

%
%
%

\subsection{Automated computations for $L_2$}
\label{l2}
Above we computed the probability of $L_2$ using elementary ad hoc considerations. Here we will compute again this probability  using Algorithm \ref{alg2} and the tool {\tt qepcad} \cite{tool:qepcad}.  As shown above, 
 the measure of language $L_2$ is described by the following system of fixed--point equations
\[\left\{
\begin{array}{lcl} 
x_1 &  \stackrel{\mu}{=} & \frac{1}{3} + \frac{2}{3}x_1^2\\
x_2 &  \stackrel{\mu}{=} & \frac{1}{3}x_1^2 + \frac{2}{3}x_2^2
\end{array}
\right.
\]
\noindent
This translates to the following {\tt qepcad} code:
\begin{lstlisting}
(A x1prime) 
[
    [x1 = 1/3 + 2/3 x1^2] /\ 
    [x1prime = 1/3 + 2/3 x1prime^2 ==> x1prime >= x1 ]
].
\end{lstlisting}
\vspace{-20pt}
The above formula $\phi_1(x_1)$, written in a formalized language of {\tt qepcad}, expresses that $x_1$ is the smallest solution to the first equation. The solution generated by the tool is $2x_1 - 1 = 0$.
Using this information we get the next formula $\phi_2(x_2)$, saying that $x_2$ is the least solution of the system of fixed point equations satisfying the inequality $x_2\leq 1$.  
\begin{lstlisting}
(E x1)(A x2prime)(A x1prime)
[   
    [2 x1 - 1 = 0] /\ 
    [x2 = 1/3 x1^2 + 2/3 x2^2] /\
    [x2 <= 1] /\
    [
        [
            [2 x1prime - 1 = 0] /\ 
            [x2prime = 1/3 x1prime^2 + 2/3 x2prime^2] /\
            [x2prime <= 1] 
        ] ==> 
        [x2prime >= x2]
     ]
].
\end{lstlisting}

\noindent
The tool {\tt qepcad} reduces the above formula $\phi_2(x_2)$ to a quantifier free expression
\begin{lstlisting}
x2 - 1 < 0 /\ 8 x2^2 - 12 x2 + 1 = 0
\end{lstlisting}

\vspace{-15pt}
\noindent
which in turn can be easily solved analytically. The formula is satisfied by exactly one 
$x = \frac{1}{4}(3-\sqrt{7})$, which is irrational and amounts approximately to $0.088$.

%
%
%

\subsection{Automated computations for $L_3$}
\label{l3}

We show below that the probability of $L_3$ is $\mu(L_3)\!=\!\frac{1}{4}(3-\sqrt{1+3\sqrt{7}})$ and thus not of the form $\frac{a+b\sqrt{c}}{d}$ for integers $a,b,c,d$. This means that  $\mu(L_3)$ is not a quadratic irrational. By a characterization proved by Euler and Lagrange this in turn means that the continued fraction representation of $\mu(L_3)$ is not eventually periodic. 


The automaton $\mathcal{A}_3$ recognizing the language $L_3$ is defined as: $\mathcal{A}_3\!=\! \langle \{q_1, q_2, q_3, \top\}, q_3, \delta_3,\pi\rangle$ where $q_3$ is the initial state, the priority function is defined as $\pi(q_1)\!=\! \pi(q_2)\!=\!  \pi(q_3) \!= \!1$ and the transition function $\delta_3$ is defined on $q_1$ and $q_2$ as the function $\delta_2$ of the previous example for the language $L_2$, and on the state $q_3$ as $\delta_3(q_3,a)\!=\! (L,q_2)\wedge (R,q_2)$ and $\delta_3(q_3,l)\!=\! (L,q_3)\wedge (R,q_3)$, for $l\!\in\!\{b,c\}$.

One can visualize the new state and transition as shown in Figure \ref{gvv_4}.
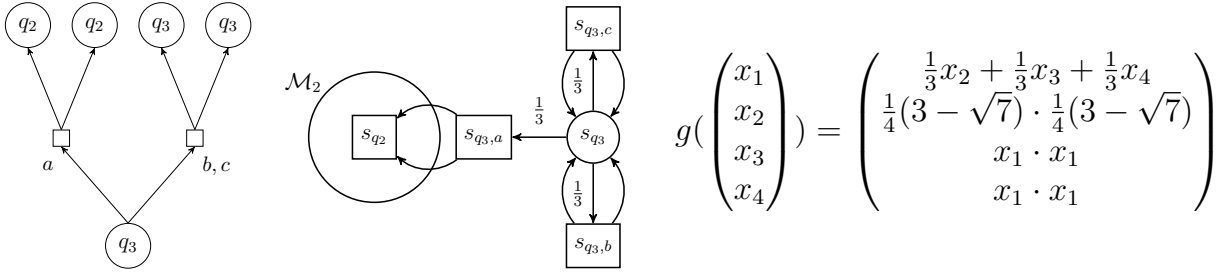
\begin{figure}[H]
\begin{tikzpicture}[scale=.72,->,level distance=0.8in,sibling distance=.15in,>=stealth',thick]

\tikzset{grow'=up}
\Tree 
[.\node [draw, circle, minimum size = 2*\diameter] {$q_3$};
        \edge[{font=\small}] node[left = 0.3cm, near end] {$a$}; 
						    [.\node [draw, minimum size = 2*\smalldiameter] {};
						        \node [draw, circle, minimum size = \diameter] {$q_2$};
						        \node [draw, circle, minimum size = \diameter] {$q_2$};
						    ]
        \edge[{font=\small}] node[right = 0.3cm, near end] {$b,c$}; 
				[.\node [draw, minimum size = 2*\smalldiameter] {};
						        \node [draw, circle, minimum size = \diameter] {$q_3$};
						        \node [draw, circle, minimum size = \diameter] {$q_3$};
                ]
]

\begin{scope}[shift={(1.5cm,0cm)},>=stealth',thick]
\pgflowlevelsynccm
\node (s1) [draw, minimum size = 4*\diameter] at ($(q2)+(3,2)$) {$s_{q_2}$};
\node (s1big) [draw, circle, minimum size = 12*\diameter] at ($(q2)+(3,2)$) {};
\node (m) [draw=none] at ($(s1)+(-1.3,1)$) {${\mathcal M}_2$};

\node (s2) [draw, circle, minimum size = 2*\diameter] at ($(q2)+(7,2)$) {$s_{q_3}$};
\node (sq2a) [draw, minimum size = 4*\diameter] at ($(s2)+(-2,0)$) {$s_{q_3,a}$};
\node (sq2c) [draw, minimum size = 4*\diameter] at ($(s2)+(0,2)$) {$s_{q_3,c}$};
\node (sq2b) [draw, minimum size = 4*\diameter] at ($(s2)+(0,-2)$) {$s_{q_3,b}$};

\path (s2) edge [left = 0.01cm] node[above = 0.1cm] {$\frac{1}{3}$} (sq2a);
\path (s2) edge [{font=\small}] node[left = 0.01cm] {$\frac{1}{3}$} (sq2b);
\path (s2) edge [{font=\small}] node[left = 0.01cm] {$\frac{1}{3}$} (sq2c);
\path (sq2a) edge [bend right=40] node {} (s1);
\path (sq2a) edge [bend left=40] node {} (s1);
\path (sq2b) edge [bend right=40] node {} (s2);
\path (sq2b) edge [bend left=40] node {} (s2);
\path (sq2c) edge [bend right=40] node {} (s2);
\path (sq2c) edge [bend left=40] node {} (s2);
\end{scope}

\begin{scope}[shift={(10cm,2cm)}]
\pgflowlevelsynccm
\resizebox{10cm}{!}{%
$
g(\begin{pmatrix} 
x_1\\
x_2\\
x_3\\
x_4\\
\end{pmatrix}) = 
\begin{pmatrix} 
\frac{1}{3} x_2 + \frac{1}{3}x_3 +\frac{1}{3}x_4 \\
\frac{1}{4}(3-\sqrt{7})\cdot \frac{1}{4}(3-\sqrt{7})\\
x_1\cdot x_1\\
x_1\cdot x_1
\end{pmatrix}
$
}
\end{scope}
\end{tikzpicture}
\caption{Language $L_3$ --- the third approximation of the language $L_\infty$. Transitions from $q_1,q_2$ are depicted in Figures \ref{gvv_1},\ref{gvv_2}. The initial state is $q_3$. The middle figure represents the MBP ${\mathcal M}_3$ corresponding to the automaton. }
\label{gvv_4}
\end{figure}

According to Theorem \ref{bellman}, the measure of language $L_3$ is equal to the value of the MBP ${\mathcal M}_3$, which 
is still simple enough, so that it can be analyzed directly by ad hoc methods. The MBP $\mathcal{M}_3$ corresponding to $\mathcal{A}_3$ extends the MBP $\mathcal{M}_2$ of the previous example with the probabilistic state $s_{q_3}$ and the three $\forall$-branching states $s_{q_3,a}$, $s_{q_3,b}$ and $s_{q_3,c}$. The new part of the automaton $\mathcal{A}_3$ is depicted in Figure \ref{gvv_2} (left). Noting the four new states are not reachable by the other eight states already present in $\mathcal{M}_2$, we already know that $val_{q_2}\!=\!\frac{1}{4}(3-\sqrt{7})$. Hence we can consider the simplified system of fixed-point equations $\mu \vec{x}.g(\vec{x})$ for calculating the values $val\!=\!(val_{q_3}, val_{q_{3},a}, val_{q_{3},b}, val_{{q_3},c})$ where $g$ is defined in Figure \ref{gvv_2} (right).
Hence the value $val_{q_3}$ is the least solution in $[0,1]$ of the equation 
\[ x = \frac{1}{3}(\frac{1}{4}(3-\sqrt{7}))^2 + \frac{2}{3}x^2, \]
which is equal to $\frac{1}{4}(3-\sqrt{1+3\sqrt{7}})$.

We will obtain the above conclusion using our computational scheme and the tool {\tt qepcad}. For this sake we will analyze the MBP ${\mathcal M}_3$ from scratch, first translating it into 
a system of 12 fixed-point equations. 
\begin{center}
$val = \begin{pmatrix} 
val_{s_{q_1}}\\
val_{s_{q_1,a}}\\
val_{s_{q_1,b}}\\
val_{s_{q_1,c}}\\
val_{s_{q_2}}\\
val_{s_{q_2,a}}\\
val_{s_{q_2,b}}\\
val_{s_{q_2,c}}\\
val_{s_{q_3}}\\
val_{s_{q_3,a}}\\
val_{s_{q_3,b}}\\
val_{s_{q_3,c}}\\
\end{pmatrix} \ \   \textnormal{ and } \ \ 
g(\begin{pmatrix} 
y_1\\
y_2\\
y_3\\
y_4\\
y_5 \\
y_6 \\
y_7 \\
y_8 \\
y_9 \\
y_{10} \\
y_{11} \\
y_{12} 
\end{pmatrix}
) = 
\begin{pmatrix} 
\frac{1}{3} y_2 +  \frac{1}{3} y_3 + \frac{1}{3} y_4  \\
1 \cdot 1 \\
y_1 \cdot y_1 \\
y_1 \cdot y_1 \\
\frac{1}{3} y_6 +  \frac{1}{3} y_7 + \frac{1}{3} y_8  \\
y_1 \cdot y_1 \\
y_5 \cdot y_5 \\
y_5 \cdot y_5 \\
\frac{1}{3} y_{10} +  \frac{1}{3} y_{11} + \frac{1}{3}  y_{12}  \\
y_5 \cdot y_5 \\
y_9 \cdot y_9 \\
y_9 \cdot y_9 \\
\end{pmatrix}
$
\end{center}
After elementary reductions and reassigning of the variables $x_1\gets y_1, x_2\gets y_5, x_3\gets y_9$, the above equation is equivalent to the following system of equations
\[\left\{
\begin{array}{lcl} 
x_1 &  \stackrel{\mu}{=} & \frac{1}{3} + \frac{2}{3}x_1^2\\
x_2 &  \stackrel{\mu}{=} & \frac{1}{3}x_1^2 + \frac{2}{3}x_2^2\\
x_3 &  \stackrel{\mu}{=} & \frac{1}{3}x_2^2 + \frac{2}{3}x_3^2
\end{array}
\right.
\]

Like in the case of language $L_2$, we start the formal analysis in {\tt qepcad} from the case of $\phi_1(x_1)$ characterizing the least solution to the first equation. 
\begin{lstlisting}
(A x1prime) 
[
    [x1 = 1/3 + 2/3 x1^2] /\ 
    [x1prime = 1/3 + 2/3 x1prime^2 ==> x1prime >= x1 ]
].
\end{lstlisting}
\vspace{-15pt}
The solver {\tt qepcad} gives us solution $2x_1 - 1 = 0$, which we use to write down the formula $\phi_2(x_2)$, describing  $x_2$ as the least solution to the second equation. 
\begin{lstlisting}
(E x1)(A x2prime)(A x1prime)
[   
    [2 x1 - 1 = 0] /\ 
    [x2 = 1/3 x1^2 + 2/3 x2^2] /\
    [x2 <= 1] /\
    [
        [
            [2 x1prime - 1 = 0] /\ 
            [x2prime <= 1] /\
            [x2prime = 1/3 x1prime^2 + 2/3 x2prime^2]
        ] ==> 
        [x2prime >= x2]
     ]
].
\end{lstlisting}
\vspace{-15pt}
As we already computed before, the formula $\phi_2(x_2)$ translates to a quantifier free expression
\begin{lstlisting}
x2 - 1 < 0 /\ 8 x2^2 - 12 x2 + 1 = 0
\end{lstlisting}
\vspace{-15pt}
Below in formula $\phi_3(x_3)$ we characterize $x_3$ as the least solution to the third equation. In the formula we use the above quantifier--free translation of the formula $\phi_2(x_2)$.  
This leads to the following formulation of $\phi_3(x_3)$.  
\begin{lstlisting}
(E x2)(A x3prime)(A x2prime)
[   
    [x2 - 1 < 0 /\ 8 x2^2 - 12 x2 + 1 = 0] /\ 
    [x3 = 1/3 x2^2 + 2/3 x3^2] /\
    [x3 <= 1] /\
    [
        [
            [x2prime - 1 < 0 /\ 8 x2prime^2 - 12 x2prime + 1 = 0] /\ 
            [x3prime <= 1] /\
            [x3prime = 1/3 x2prime^2 + 2/3 x3prime^2]
        ] ==> 
        [x3prime >= x3]
     ]
].
\end{lstlisting}
The formula $\phi_3(x_3)$ is translated by {\tt qepcad} to the following quantifier free expression
\begin{lstlisting}
x3 - 1 < 0 /\ 256 x3^4 - 768 x3^3 + 832 x3^2 - 384 x3 + 1 = 0
\end{lstlisting}

\vspace{-25pt}
\noindent
The only real solution is $x_3=\frac{1}{4}(3-\sqrt{1+3\sqrt{7}})$, which is approximately $0.0026$. This example shows in particular, that $x_3$ is not a quadratic irrational. In the case of the language $L_2$, the measure was expressible just with square roots of rational numbers. 

%
%

\subsection{Example of a comeager language of probability $0$}
\label{subsection:linfty}
 This subsection constitutes a proof of Proposition \ref{prop_example_2}.
The regular language $L_{\infty}$ is recognized by the (deterministic) game automaton already defined in Section \ref{automata2mbp} and depicted in Figure \ref{gvv_3} (left), where the states $q_1$ and $q_2$ have priority $1$ and $2$, respectively. The MBP associated with this automaton,  depicted in Figure \ref{gvv_3} (right), has eight states. The vector of values $val$ is 
equal to $\nu \vec{y}^{\hspace{.16667em}2}. \mu \vec{y}^{\hspace{.16667em}1}.g(\vec{y}^{\hspace{.16667em}1},\vec{y}^{\hspace{.16667em}2})$ where
\begin{center}
$val = \begin{pmatrix} 
val_{s_{q_1}}\\
val_{s_{q_1,a}}\\
val_{s_{q_1,b}}\\
val_{s_{q_1,c}}\\
val_{s_{q_2}}\\
val_{s_{q_2,a}}\\
val_{s_{q_2,b}}\\
val_{s_{q_2,c}}\\
\end{pmatrix} \ \   \textnormal{ and } \ \ 
g(\begin{pmatrix} 
y_1\\
y_2\\
y_3\\
y_4\\
\_ \\
\_ \\
\_ \\
\_ 
\end{pmatrix},
\begin{pmatrix} 
\_ \\
\_ \\
\_ \\
\_ \\
y_5 \\
y_6 \\
y_7 \\
y_8 
\end{pmatrix}
) = 
\begin{pmatrix} 
\frac{1}{3} y_2 +  \frac{1}{3} y_2 + \frac{1}{3} y_4  \\
y_5 \cdot y_5 \\
y_1 \cdot y_1 \\
y_1 \cdot y_1 \\
\frac{1}{3} y_5 +  \frac{1}{3} y_1 + \frac{1}{3} y_1  \\
y_5 \cdot y_5 \\
y_1 \cdot y_1 \\
y_1 \cdot y_1 \\
\end{pmatrix}
$
\end{center}
By straightforward simplifications we obtain the system of fixed-point equations 
\[\left\{
\begin{array}{lcl} 
x_1 &  \stackrel{\mu}{=}  & \frac{1}{3}x_2^2 + \frac{2}{3}x_1^2\\
x_2 &  \stackrel{\nu}{=}  & \frac{1}{3}x_2^2 + \frac{2}{3}x_1^2
\end{array}
\right.
\]
in the two variables $x_1$ and $x_2$ (corresponding to the variables $y_1$, representing $s_{q_1}$, and  $y_5$, representing $s_{q_2}$) which has solution $(0,0)$, as the automated computation below will reveal. Hence $val_{s_{q_2}}\!=\! 0$ and thus we will show that the probability $\mu(L_{\infty})$ of the language $L_{\infty}$ is $0$. 

The system of fixed-point equations in the variables $x_1$ and $x_2$ can be reduced to a formula  in the language of real closed fields. First, we define a formula $\phi(x_1,x_2)$ which characterizes $x_1$ as the least fixed-point of the equation $x_1\!=\!\frac{1}{3}x_2^2 + \frac{2}{3}x_1^2$ where $x_2$ is a parameter. 
\[\begin{array}{lcl}
\psi_1(x_1,x_2) &  \bydef & (x_1 = \frac{1}{3}x_2^2 + \frac{2}{3}x_1^2)\  \wedge \ \forall x_1' . \big( (x_1' = \frac{1}{3} x_2^2 + \frac{2}{3} (x_1')^2) \Rightarrow (x_1' \geq x_1) \big) 
\end{array}
\]  
In the syntax of the tool {\tt qepcad} (see \cite{tool:qepcad} for more details) we have 
\begin{lstlisting}
(A x1prime) 
[
    [x1 = 1/3 x2^2 + 2/3 x1^2] /\ 
    [x1prime = 1/3 x2^2 + 2/3 x1prime^2 ==> x1prime >= x1 ]
].
\end{lstlisting}
\vspace{-12pt}
The solver {\tt qepcad} produces an equivalent quantifier free formula
\begin{lstlisting}
4 x1 - 3 <= 0 /\ x2^2 + 2 x1^2 - 3 x1 = 0
\end{lstlisting}

\vspace{-12pt}
We now define a formula $\psi_2(x_2)$ characterizing $x_2$ as the greatest solution to the equation $x_2\!=\!\frac{1}{3}x_2^2 + \frac{2}{3}x_1^2$ where $x_1$ is the unique number satisfying $\psi_1(x_1,x_2)$:
\[\begin{array}{lcl}
\psi_2(x_2) & \bydef & \exists x_1. \big( x_2 \!=\! \frac{1}{3}x_2^2 + \frac{2}{3}x_1^2) \ \wedge\    \psi_1(x_1,x_2) \big) \ \wedge \\
            &   & \forall x_2' \big( \exists_{x'_1} (x'_2 = \frac{1}{3}(x_2')^2 + \frac{2}{3}(x_1')^2) \ \wedge \ \psi_1(x_1',x_2')\big)   \Rightarrow x_2' \leq x_2.    
\end{array}
\]
In the {\tt qepcad} syntax this can be expressed using the following code:
\begin{lstlisting}
(E x1)(A x2prime)(A x1prime)
[   
    [4 x1 - 3 <= 0 /\ x2^2 + 2 x1^2 - 3 x1 = 0] /\
    [x2 = 1/3 x2^2 + 2/3 x1^2] /\
    [
        [
            [4 x1prime - 3 <= 0 /\ x2prime^2 + 2 x1prime^2 - 3 x1prime = 0] /\ 
            [x2prime = 1/3 x2prime^2 + 2/3 x1prime^2]
        ] ==> 
        [x2prime <= x2]
     ]
].
\end{lstlisting}
\vspace{-22pt}
The solver {\tt qepcad} translates $\psi_2(x_2)$ to the following quantifier free expression
\begin{lstlisting}
x2 = 0
\end{lstlisting}
\vspace{-22pt}
This means that the measure of the language $L_\infty$ is $0$.
Our example is concluded by the following 
\newcommand{\propmeag}{ 
The set $L_{\infty}\!\subseteq\! \trees{a,b,c}$ is comeager.} 
\begin{proposition}
\label{propmeag}
\propmeag
\end{proposition}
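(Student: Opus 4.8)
The plan is to prove that the complement $\trees{a,b,c}\setminus L_{\infty}$ is meager, by exhibiting it as a countable union of closed nowhere dense sets. For each $m\in\omega$ let $D_m$ be the set of trees $t$ that admit an infinite branch on which the letter $a$ occurs at most $m$ times. Since the number of $a$'s along a branch equals the supremum of the numbers of $a$'s along its finite prefixes, a branch of $t$ carries only finitely many $a$'s exactly when it carries at most $m$ of them for some $m$; hence $\trees{a,b,c}\setminus L_{\infty}=\bigcup_{m\in\omega}D_m$, and it suffices to show that each $D_m$ is nowhere dense.

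First I would show that $D_m$ is closed, i.e. that its complement is open. Fix $t\notin D_m$ and let $S\subseteq\{L,R\}^{*}$ be the tree of those vertices $v$ for which the path from the root to $v$ carries at most $m$ occurrences of $a$ in $t$. An infinite branch of $S$ would be an infinite branch of $t$ along which the $a$-count is non-decreasing and everywhere $\leq m$, hence carries at most $m$ occurrences of $a$, contradicting $t\notin D_m$; so $S$ is a finitely branching tree without an infinite branch, and is therefore finite by K\"onig's lemma. Choosing $d$ larger than the depth of every vertex of $S$, every vertex at depth $d$ has a root-path carrying at least $m+1$ occurrences of $a$; since this depends only on the labels of $t$ at depth $\leq d$, every tree agreeing with $t$ up to depth $d$ has the same property, and any such tree lies outside $D_m$ because each of its branches meets depth $d$. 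Hence a basic open neighbourhood of $t$ is disjoint from $D_m$, and the complement of $D_m$ is open.

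Next I would check that $D_m$ has empty interior. A basic open set constrains the labels on some finite set $F$ of vertices; letting $d$ be the maximal depth occurring in $F$ and assigning the letter $a$ to every vertex of depths $d+1,\dots,d+m+1$ (keeping the prescribed labels on $F$) yields a tree in this open set all of whose branches carry at least $m+1$ occurrences of $a$, hence a tree outside $D_m$. So no basic open set is contained in $D_m$. A closed set with empty interior is nowhere dense, so each $D_m$ is nowhere dense, $\trees{a,b,c}\setminus L_{\infty}=\bigcup_{m\in\omega}D_m$ is meager, and therefore $L_{\infty}$ is comeager.

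The one genuinely non-trivial ingredient is the appeal to K\"onig's lemma, which is what converts the ``global'' hypothesis $t\notin D_m$ — a statement about all infinite branches of $t$ — into a ``local'' statement about the finitely many labels of $t$ up to a fixed depth; the remaining steps are routine manipulations with the basis of clopen sets determined by finite prefixes of $\trees{a,b,c}$. Equivalently, one could observe that each $L_n=\{t\mid a\text{ occurs at least }n\text{ times on every branch of }t\}$ is open and dense by exactly these two arguments, so that $L_{\infty}=\bigcap_{n}L_n$ is a dense $G_\delta$.
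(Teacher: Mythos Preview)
Your proof is correct and follows essentially the same approach as the paper: both decompose the complement of $L_{\infty}$ as $\bigcup_n (\trees{a,b,c}\setminus L_n)$ (your $D_m$ is exactly the complement of $L_{m+1}$) and show each piece is nowhere dense by extending any finite prefix with enough layers of $a$'s. The only difference is that you additionally prove each $D_m$ is closed via K\"onig's lemma before showing empty interior, whereas the paper directly uses the characterization ``every nonempty open $U$ contains a nonempty open $V$ disjoint from the set'' and so never needs closedness; as you yourself note at the end, this amounts to the same thing.
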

\begin{proof}
The proof consists of unfolding of the definition of a comeager set. From the definition it is enough to verify that the complement of $L$ is meager. Notice, that $t\not\in L$ if and only if $t\not\in L_n$ for certain $n\in\omega$. Hence, it is enough to prove that for every $n\in\omega$ the set $\Tr_{\{a,b,c\}}\setminus L_n$ is nowhere dense. That is, we have to show that for every non--empty open set $U\subset\Tr_{\{a,b,c\}}$ there exists $V\subset U$, an open subset, such that $V\subset L_n$. We can assume that $U$ is a set of trees extending certain prefix $p$ of the binary tree. As $V$ we take a set of trees which extends a prefix $q$, where $q$ is an extension of $p$ by $n$ frontiers consisting of letters $a$. 
\end{proof}

%
%
\subsection{Computing the measure of $W_{i,k}$}
\label{wik}
\label{subsec:wik}
The family of regular languages $W_{i,k}$, indexed by pairs $ i \! < \! k$ of natural numbers, constitutes a tool for investigating properties of regular languages using topological methods (\cite[p. 329]{Arnold99}, see also \cite{NiwinskiArnold2008,Bradfield98,GMMS2014}). 
The standard game automaton $\mathcal{A}_{i,k}$ over the language $\Sigma_{i,k}\!=\! \{\forall,\exists\}\times\{ i, i+1,\dots, k-1,k\}$ accepting $W_{i,k}\!\subseteq\! \trees{\Sigma_{i,k}}$ 
is defined as $\mathcal{A}_{i,k}\!=\! \langle Q, q_i,\delta, \pi\rangle$ where  $Q\!=\!\{q_i,q_{i+1},\dots, q_{k}\}$, the initial state is $q_i$ and, for each $i\!\leq\! j\! \leq\! k$, the state $q_j$ has priority $\pi(q_j)\!=\!j$ and the transition function $\delta$ is defined on $q_j$ as in Figure \ref{wik_auto}. Our proof of Proposition \ref{prop_example_3}, stated in the Introduction, goes by analyzing the system of fixed-point equations associated with the game automaton $\mathcal{A}_{i,k}$. Importantly, such a system consists of linear equations and not, as in the general case, of higher order polynomials. This system can be solved using    standard   techniques of linear algebra.
\begin{figure}[H]
\centering
\begin{tikzpicture}[scale=0.67,->,level distance=1in,sibling distance=.15in,>=stealth',thick]

\tikzset{grow'=up}
\Tree 
[.\node (rootprime) [draw, circle, minimum size = \diameter] {$q_j$};
        \edge[{font=\small}] node[left = 0.3cm, near end] {$\exists,i$}; 
        		[.\node [draw, diamond, aspect=2, minimum size = 2*\diameter,inner sep=1pt] {};	
						\node [draw, circle, minimum size = \diameter] {$q_i$};
						\node [draw, circle, minimum size = \diameter] {$q_i$};
                ]
        \edge[{font=\small}] node[right = 0.3cm, near end] {$\forall,i$}; 
				[.\node [draw, minimum size = 2*\smalldiameter] {};
						\node [draw, circle, minimum size = \diameter] {$q_i$};
						\node [draw, circle, minimum size = \diameter] {$q_i$};
                ]
        \edge[draw=none]; \node {};         
        \edge[draw=none]; \node {{\bf \ldots}}; 
        \edge[draw=none]; \node {}; 
        \edge[{font=\small}] node[left = 0.2cm, near end] {$\exists,k$}; 
        		[.\node [draw, diamond, aspect=2, minimum size = 2*\diameter,inner sep=1pt] {};	
						\node [draw, circle, minimum size = \diameter] {$q_k$};
						\node [draw, circle, minimum size = \diameter] {$q_k$};
                ]
        \edge[{font=\small}] node[right= 0.4cm, near end] {$\forall,k$}; 
				[.\node [draw, minimum size = 2*\smalldiameter] {};
						\node [draw, circle, minimum size = \diameter] {$q_k$};
						\node [draw, circle, minimum size = \diameter] {$q_k$};
                ]
]

\end{tikzpicture}
\caption{Transition of the automaton $\mathcal{A}_{i,k}$ recognizing $W_{i,k}$.}
\label{wik_auto}
\end{figure}
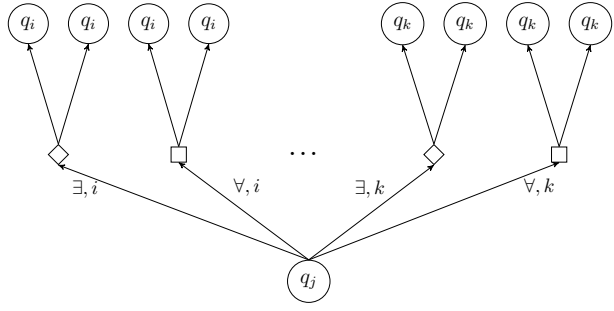

\newcommand{\wikdet}{For every $k\geq 3$ holds \[ \det(A_k) = (-1)^{k-1}\frac{1}{k}. \] } 
\begin{proof} (of Proposition \ref{prop_example_3})
In what follows we use notation $x\odot y\!=\! x+y-xy$ for denoting binary coproducts. Let ${\mathcal A}$ be the automaton  depicted in Figure \ref{wik_auto} accepting the language $W_{i,k}$ ($k>i$, $i=0,1$). The automaton has $k-i$ states $q_i,q_{i+1},\ldots,q_k$. We just consider the case $i\!=\!1$ and $k$ odd, since the other cases are analogous. We will show that the measure of $W_{1,k}$ is $0$. From the definition of the game automaton ${\mathcal A}$ we infer the corresponding system of $k$ fixed--point equations:

\[\left\{
\begin{array}{lcl} 
x_1 & \stackrel{\mu}{=} & \frac{1}{2k}(x_1\odot x_1 + x_1\cdot x_1 + \ldots + x_k\odot x_k + x_k\cdot x_k)\\
x_2 & \stackrel{\nu}{=} & \frac{1}{2k}(x_1\odot x_1 + x_1\cdot x_1 + \ldots + x_k\odot x_k + x_k\cdot x_k)\\
& \vdots & \\
x_{k-1} & \stackrel{\nu}{=} & \frac{1}{2k}(x_1\odot x_1 + x_1\cdot x_1 + \ldots + x_k\odot x_k + x_k\cdot x_k)\\
x_{k} & \stackrel{\mu}{=} & \frac{1}{2k}(x_1\odot x_1 + x_1\cdot x_1 + \ldots + x_k\odot x_k + x_k\cdot x_k)
\end{array}
\right.
\]
Note that the first and last equations are least-fixed point equations since $1$ and $k$ are odd by assumption.  Unfolding the definition of $\odot$ we get
\[\left\{
\begin{array}{lcl} 
x_1 & \stackrel{\mu}{=} & \frac{1}{2k}((2x_1-x_1^2 + x_1^2) + \ldots + (2x_k-x_k^2 + x_k^2))\\
x_2 & \stackrel{\nu}{=} & \frac{1}{2k}((2x_1-x_1^2 + x_1^2) + \ldots + (2x_k-x_k^2 + x_k^2))\\
& \vdots & \\
x_k & \stackrel{\mu}{=} & \frac{1}{2k}((2x_1-x_1^2 + x_1^2) + \ldots + (2x_k-x_k^2 + x_k^2))
\end{array}
\right.
\]
This simplifies to the following system of equations
\[\left\{
\begin{array}{lcl} 
x_1 & \stackrel{\mu}{=} & \frac{1}{2k}(2x_1 + \ldots + 2x_k)\\
x_2 & \stackrel{\nu}{=} & \frac{1}{2k}(2x_1 + \ldots + 2x_k)\\
& \vdots & \\
x_k & \stackrel{\mu}{=} & \frac{1}{2k}(2x_1 + \ldots + 2x_k)
\end{array}
\right.
\]
Since $k$ is odd, we are looking for the least $x_k\in [0,1]$ satisfying the fixed--point formula. We will show that $0$ is a fixed-point and, therefore, the desired least fixed-point. Substituting $0$ for $x_k$ we get:
\[\left\{
\begin{array}{lcl} 
x_1 & \stackrel{\mu}{=} & \frac{1}{k}(x_1 + \ldots + x_{k-1})\\
x_2 & \stackrel{\nu}{=} & \frac{1}{k}(x_1 + \ldots + x_{k-1})\\
& \vdots & \\
x_{k-1} & \stackrel{\nu}{=}  & \frac{1}{k}(x_1 + \ldots + x_{k-1})\\
0 & = & \frac{1}{k}(x_1 + \ldots + x_{k-1})
\end{array}
\right.
\]

We are going to show that the solution of this system of fixed-point equations is the vector $x_1=0,x_2=0\ldots,x_{k-1}=0$. This will conclude the proof of Proposition \ref{prop_example_3}.

To prove this, note that the solution is necessarily a solution of the linear system of equations
\[\left\{
\begin{array}{lcl} 
x_1 & = & \frac{1}{k}(x_1 + \ldots + x_{k-1})\\
x_2 & = & \frac{1}{k}(x_1 + \ldots + x_{k-1})\\
& \vdots & \\
x_{k-1} & =  & \frac{1}{k}(x_1 + \ldots + x_{k-1})\\
0 & = & \frac{1}{k}(x_1 + \ldots + x_{k-1})
\end{array}
\right.
\]
which, as we now prove, has only one solution. To prove this we now focus on the first $k-1$ equations and show that even without the last equation the only solution is $x_1=0,\ldots,x_{k-1}=0$. This vector of numbers satisfies also the last equation $0 = \frac{1}{k}(x_1 + \ldots + x_{k-1})$. 
In the first step we move the left--hand side variables to 
the right--hand side. 
\[\left\{
\begin{array}{lcl} 
0 & = & -\frac{k-1}{k} x_1 + \frac{1}{k} x_2 + \ldots + \frac{1}{k} x_{k-1}\\
0 & = & \frac{1}{k}x_1 - \frac{k-1}{k} x_2 \ldots + \frac{1}{k} x_{k-1}\\
& \vdots & \\
0 & = & \frac{1}{k}x_1 + \ldots + \frac{1}{k}x_{k-1} - \frac{k-1}{k} x_k
\end{array}
\right.
\]
Hence we are getting a $(k-1)\times (k-1)$ matrix $A_k$ with the following coefficients.
\[ A_{k} = 
\begin{pmatrix} 
-\frac{k-1}{k}  & \frac{1}{k}       &   \ldots  & \frac{1}{k} \\
\frac{1}{k}     & -\frac{k-1}{k}    &   \ldots  & \frac{1}{k} \\
& \vdots & \\
\frac{1}{k}     & \frac{1}{k}       &   \ldots  & -\frac{k-1}{k}
\end{pmatrix}
\]

This is a linear system of equations and therefore it is enough to verify that the determinant of the matrix is different than $0$.  
Hence the Lemma below finishes the proof of Proposition \ref{prop_example_3}.
\end{proof}

\begin{lemma}\label{lemma:wikdet} 
\wikdet
\end{lemma}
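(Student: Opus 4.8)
The plan is to recognize $A_k$ as a scalar multiple of a rank-one perturbation of the identity and then read off its determinant from its eigenvalues. Write $n = k-1$, let $I$ be the $n\times n$ identity, and let $J = \mathbf 1\mathbf 1^{\mathsf T}$ be the $n\times n$ all-ones matrix, where $\mathbf 1\in\mathbb R^n$ is the all-ones vector. Comparing entries with the displayed form of $A_k$, one checks directly that
\[
A_k = \tfrac{1}{k}\bigl(J - kI\bigr),
\]
since the diagonal entries of $\tfrac1k(J-kI)$ are $\tfrac1k(1-k) = -\tfrac{k-1}{k}$ and its off-diagonal entries are $\tfrac1k$.

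Next I would compute the spectrum of $J$. Being of rank one, $J$ has $0$ as an eigenvalue of multiplicity $n-1 = k-2$, and its remaining eigenvalue is $\operatorname{tr}(J) = n = k-1$, with eigenvector $\mathbf 1$. Hence $J - kI$ has eigenvalue $(k-1)-k = -1$ with multiplicity $1$ and eigenvalue $0 - k = -k$ with multiplicity $k-2$, so $A_k = \tfrac1k(J-kI)$ has eigenvalues $-\tfrac1k$ (multiplicity $1$) and $-1$ (multiplicity $k-2$). The determinant is the product of the eigenvalues:
\[
\det(A_k) = \Bigl(-\tfrac1k\Bigr)\cdot(-1)^{k-2} = (-1)^{k-1}\,\tfrac1k,
\]
which is precisely the claimed identity. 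Note that the hypothesis $k\geq 3$ is used only to ensure $n = k-1\geq 2$, so that the off-diagonal pattern actually occurs and the multiplicity $k-2$ is nonnegative.

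An equivalent route I would be equally happy to take is plain row reduction, avoiding eigenvalues altogether: each row of $A_k$ sums to $-\tfrac{k-1}{k} + (k-2)\tfrac1k = -\tfrac1k$, so adding columns $2,\dots,k-1$ to the first column makes that column the constant vector with all entries $-\tfrac1k$; factoring $-\tfrac1k$ out of the first column and then subtracting the (new) first row from every other row yields a lower-triangular matrix with a $1$ in the top-left corner and $-1$ in each of the remaining $k-2$ diagonal positions, giving $\det(A_k) = -\tfrac1k\cdot(-1)^{k-2}$ once more. Either way the argument is entirely routine; there is no genuine obstacle here, only the bookkeeping of signs and the fact that the relevant dimension is $k-1$ rather than $k$, which is exactly why this computation is isolated as a separate lemma rather than inlined into the proof of Proposition \ref{prop_example_3}.
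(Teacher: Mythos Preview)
Your proof is correct. Both the eigenvalue argument and the column-summing row reduction are valid and yield the claimed value; the only cosmetic slip is that the matrix you obtain after subtracting the first row from the others is upper triangular rather than lower triangular, which of course does not affect the determinant.

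The paper takes a different, more laborious route: it performs an iterated cofactor expansion along the first row. At each step it subtracts the second row from the first to produce $(-1,1,0,\dots,0)$, then adds the first column to the second to produce $(-1,0,0,\dots,0)$, and expands along that row; this reduces the $(k-1)\times(k-1)$ determinant to a smaller determinant of the same structural type, picking up a factor of $-1$ each time, until a $2\times 2$ case is reached and computed directly. Your spectral argument is considerably shorter because it exploits the global structure $A_k=\tfrac{1}{k}(J-kI)$ in one stroke, whereas the paper's argument only uses local row operations and never names this decomposition. What the paper's approach buys is that it avoids any appeal to eigenvalues or diagonalizability and stays entirely within elementary determinant manipulations; your alternative row-reduction route achieves the same elementarity with less bookkeeping.
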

\noindent
Before the proof let us consider a special of $k=3$ . Then $ A_{3} = 
\begin{pmatrix} 
-\frac{2}{3}  & \frac{1}{3} \\
\frac{1}{3} & -\frac{2}{3}
\end{pmatrix}$, hence $\det(A_3)=\frac{4}{9}-\frac{1}{9}=\frac{3}{9}=\frac{1}{3}$. Moreover, one can notice that if we do not limit attention to the case $x_k=0$, then the system of equations has infinitely many solutions, because the matrix $\begin{pmatrix} 
-\frac{2}{3}  & \frac{1}{3} & \frac{1}{3} \\
\frac{1}{3} & -\frac{2}{3} & \frac{1}{3} \\
\frac{1}{3} & \frac{1}{3} & -\frac{2}{3}
\end{pmatrix}$ has determinant equal to $0$.


\begin{proof}(of Lemma \ref{lemma:wikdet})

We simplify the determinant of the $(k-1)\times (k-1)$ matrix $A_k$ in order to have $0$'s in the first row on all positions, with except of the first one. 
\[\det(A_k) = \det
\begin{pmatrix} 
-\frac{k-1}{k}  & \frac{1}{k}       &   \frac{1}{k} & \ldots & \frac{1}{k} \\
\frac{1}{k}     & -\frac{k-1}{k}    &   \frac{1}{k} & \ldots & \frac{1}{k} \\
& \vdots & \\
\frac{1}{k}     & \frac{1}{k}       &   \frac{1}{k} & \ldots & -\frac{k-1}{k}
\end{pmatrix} \]
\[= \det
\begin{pmatrix} 
(-\frac{k-1}{k}-\frac{1}{k}) & (\frac{1}{k}+\frac{k-1}{k})   &  0 & \ldots & 0 \\
\frac{1}{k}     & -\frac{k-1}{k}    &   \frac{1}{k} & \ldots & \frac{1}{k} \\
& \vdots & \\
\frac{1}{k}     & \frac{1}{k}       &   \frac{1}{k} & \ldots & -\frac{k-1}{k}
\end{pmatrix} \]
\[ = \det
\begin{pmatrix} 
-1  & 1       &   0 & \ldots & 0 \\
\frac{1}{k}     & -\frac{k-1}{k}    &   \frac{1}{k} & \ldots & \frac{1}{k} \\
& \vdots & \\
\frac{1}{k}     & \frac{1}{k}       &   \frac{1}{k} & \ldots & -\frac{k-1}{k}
\end{pmatrix} \]
\[= \det
\begin{pmatrix} 
-1  & 1-1       &   0 & \ldots & 0 \\
\frac{1}{k}     & (-\frac{k-1}{k}+\frac{1}{k})    &   \frac{1}{k} & \ldots & \frac{1}{k} \\
& \vdots & \\
\frac{1}{k}     & (\frac{1}{k}+\frac{1}{k})       &   \frac{1}{k} & \ldots & -\frac{k-1}{k}
\end{pmatrix} \]
\[= \det
\begin{pmatrix} 
-1  & 0       &   0 & \ldots & 0 \\
\frac{1}{k}     & -\frac{k-2}{k}    &   \frac{1}{k} & \ldots & \frac{1}{k} \\
& \vdots & \\
\frac{1}{k}     & \frac{2}{k}       &   \frac{1}{k} & \ldots & -\frac{k-1}{k}
\end{pmatrix}.
\]

Having $0$'s in the first row we unfold the determinant with respect to this row. The next determinant 
is of the size $(k-2)\times (k-2)$. Again, we aim to have $0$'s in the first row on all positions, with except of the first one. 

\[(-1) \det
\begin{pmatrix} 
-\frac{k-2}{k} & \frac{1}{k}  &  \frac{1}{k} & \ldots & \frac{1}{k} \\
\frac{2}{k}     & -\frac{k-1}{k}    &   \frac{1}{k} & \ldots & \frac{1}{k} \\
& \vdots & \\
\frac{2}{k}     & \frac{1}{k}       &   \frac{1}{k} & \ldots & -\frac{k-1}{k}
\end{pmatrix} \]
\[ = (-1) \det
\begin{pmatrix} 
-1  & 1       &   0 & \ldots & 0 \\
\frac{2}{k}     & -\frac{k-1}{k}    &   \frac{1}{k} & \ldots & \frac{1}{k} \\
& \vdots & \\
\frac{2}{k}     & \frac{1}{k}       &   \frac{1}{k} & \ldots & -\frac{k-1}{k}
\end{pmatrix} \]
\[= (-1) \det
\begin{pmatrix} 
-1  & 1-1       &   0 & \ldots & 0 \\
\frac{2}{k}     & (-\frac{k-1}{k}+\frac{2}{k})    &   \frac{1}{k} & \ldots & \frac{1}{k} \\
& \vdots & \\
\frac{2}{k}     & (\frac{1}{k}+\frac{2}{k})       &   \frac{1}{k} & \ldots & -\frac{k-1}{k}
\end{pmatrix} \]
\[ = (-1) \det
\begin{pmatrix} 
-1  & 0       &   0 & \ldots & 0 \\
\frac{2}{k}     & -\frac{k-3}{k}    &   \frac{1}{k} & \ldots & \frac{1}{k} \\
& \vdots & \\
\frac{2}{k}     & \frac{3}{k}       &   \frac{1}{k} & \ldots & -\frac{k-1}{k}
\end{pmatrix}.
\]

Now we are ready to again unfold the determinant with respect to the first row. In the following computations we deal 
with the matrix of the size $(k-3)\times (k-3)$. 

\[(-1)^2 \det
\begin{pmatrix} 
-\frac{k-3}{k} & \frac{1}{k}  &  \frac{1}{k} & \ldots & \frac{1}{k} \\
\frac{3}{k}     & -\frac{k-1}{k}    &   \frac{1}{k} & \ldots & \frac{1}{k} \\
& \vdots & \\
\frac{3}{k}     & \frac{1}{k}       &   \frac{1}{k} & \ldots & -\frac{k-1}{k}
\end{pmatrix} \]
\[ = (-1)^2 \det
\begin{pmatrix} 
-1  & 1       &   0 & \ldots & 0 \\
\frac{3}{k}     & -\frac{k-1}{k}    &   \frac{1}{k} & \ldots & \frac{1}{k} \\
& \vdots & \\
\frac{3}{k}     & \frac{1}{k}       &   \frac{1}{k} & \ldots & -\frac{k-1}{k}
\end{pmatrix} \]
\[= (-1)^2 \det
\begin{pmatrix} 
-1  & 1-1       &   0 & \ldots & 0 \\
\frac{3}{k}     & (-\frac{k-1}{k}+\frac{3}{k})    &   \frac{1}{k} & \ldots & \frac{1}{k} \\
& \vdots & \\
\frac{3}{k}     & (\frac{1}{k}+\frac{3}{k})       &   \frac{1}{k} & \ldots & -\frac{k-1}{k}
\end{pmatrix} \]
\[= (-1)^2 \det
\begin{pmatrix} 
-1  & 0       &   0 & \ldots & 0 \\
\frac{3}{k}     & -\frac{k-4}{k}    &   \frac{1}{k} & \ldots & \frac{1}{k} \\
& \vdots & \\
\frac{3}{k}     & \frac{4}{k}       &   \frac{1}{k} & \ldots & -\frac{k-1}{k}
\end{pmatrix} = \cdots
\]

Finally, after $k-3$ steps, we are ending up with a determinant of the size $2\times 2$:

\[
(-1)^{k-3} \det
\begin{pmatrix} 
-1  & 1  \\
\frac{k-2}{k}     & -\frac{k-1}{k} 
\end{pmatrix} = (-1)^{k-3}(\frac{k-1}{k}-\frac{k-2}{k}) = (-1)^{k-3}\frac{1}{k}.
\]

\end{proof}

%
%






%
%
\section{Conclusion}
\label{conclusion}
In this work we presented an algorithm for computing the probability of regular languages defined by game automata. The Probability Problem in its full generality remains open. A possible direction for future research is to investigate approximations of regular languages by simpler regular languages. For example, given a regular language $L$ of trees, is it possible to find a regular language $G$ defined by a game automaton 
such that $L\triangle G = (L\setminus G)\cup (G\setminus L)$ is of probability $0$, i.~e.~$L$ differs from $G$ by a set of probability $0$? An effective answer to this question, that is an algorithm constructing a language $G$ from $L$, combined with the algorithm described in this paper would lead to a full solution to the Probability Problem. 

\bibliography{biblio}

\begin{thebibliography}{10}

\bibitem{Rudiments2001}
A.~Arnold and D.~Niwi{\'n}ski.
\newblock {\em Rudiments of {$\mu$}-calculus}.
\newblock Studies in Logic. North-{H}olland, 2001.

\bibitem{Arnold99}
Andr{\'{e}} Arnold.
\newblock The {\(\mathrm{\mu}\)}-calculus alternation-depth hierarchy is strict
  on binary trees.
\newblock {\em {ITA}}, 33(4/5):329--340, 1999.

\bibitem{NiwinskiArnold2008}
Andr{\'e} Arnold and Damian Niwi{\'n}ski.
\newblock Continuous separation of game languages.
\newblock {\em Fundamenta Informaticae}, 81:19--28, 2008.

\bibitem{Bradfield98}
Julian~C. Bradfield.
\newblock The modal {\(\mathrm{\mu}\)}-calculus alternation hierarchy is
  strict.
\newblock {\em Theor. Comput. Sci.}, 195(2):133--153, 1998.

\bibitem{tool:qepcad}
Christopher~W. Brown.
\newblock {Q}{E}{P}{C}{A}{D} {B}: A program for computing with semi-algebraic
  sets using cads.
\newblock {\em SIGSAM Bull.}, 37(4):97--108, 2003.

\bibitem{ChatPhD}
Krishnendu Chatterjee.
\newblock {\em Stochastic {$\omega$}-Regular Games}.
\newblock PhD thesis, University of California, Berkeley, 2007.

\bibitem{Chat2004}
Krishnendu Chatterjee, Marcin Jurdzi{\'n}ski, and Thomas~A. Henzinger.
\newblock Quantitative stochastic parity games.
\newblock In {\em Proc. of SODA}, pages 121--130, 2004.

\bibitem{CDK2012}
Taolue Chen, Klaus Dr{\"a}ger, and Stefan Kiefer.
\newblock Model checking stochastic branching processes.
\newblock In {\em Proceedings of MFCS}, volume 7468 of {\em Lecture Notes in
  Computer Science}, pages 271--282. Springer, 2012.

\bibitem{AM04}
Luca {de Alfaro} and Rupak Majumdar.
\newblock Quantitative solution of omega-regular games.
\newblock {\em Journal of Computer and System Sciences}, 68:374 -- 397, 2004.

\bibitem{prof_fach}
Jacques Duparc, Alessandro Facchini, and Filip Murlak.
\newblock Definable operations on weakly recognizable sets of trees.
\newblock In {\em Proc. of FSTTCS}, pages 363--374, 2011.

\bibitem{game_auto}
Alessandro Facchini, Filip Murlak, and Michal Skrzypczak.
\newblock Rabin-{M}ostowski index problem: {A} step beyond deterministic
  automata.
\newblock In {\em Proc. of {LICS}}, pages 499--508, 2013.

\bibitem{GMMS2014}
Tomasz Gogacz, Henryk Michalewski, Matteo Mio, and Michal Skrzypczak.
\newblock Measure properties of game tree languages.
\newblock In {\em Proc. {MFCS}}, pages 303--314, 2014.

\bibitem{kechris}
A.~S. Kechris.
\newblock {\em Classical Descriptive Set Theory}.
\newblock Springer Verlag, 1994.

\bibitem{MM2015a}
Henryk Michalewski and Matteo Mio.
\newblock {B}aire {C}ategory {Q}uantifier in {M}onadic {S}econd {O}rder
  {L}ogic.
\newblock In {\em Proc. of ICALP}, 2015.

\bibitem{MioThesis}
Matteo Mio.
\newblock {\em Game Semantics for Probabilistic {$\mu$}-Calculi}.
\newblock PhD thesis, School of Informatics, University of Edinburgh, 2012.

\bibitem{MIO2012b}
Matteo Mio.
\newblock {P}robabilistic {M}odal {$\mu$}-{C}alculus with {I}ndependent
  product.
\newblock {\em Logical Methods in Computer Science}, 8(4), 2012.

\bibitem{MioSimpsonFICS2013}
Matteo Mio and Alex Simpson.
\newblock {\L}ukasiewicz mu-calculus.
\newblock In {\em Proc. of Workshop on Fixed Points in Computer Science},
  volume 126 of {\em EPTCS}, 2013.

\bibitem{mullerschupp}
David~E. Muller and Paul~E. Schupp.
\newblock Simulating alternating tree automata by nondeterministic automata:
  New results and new proofs of the theorems of {R}abin, {M}c{N}aughton and
  {S}afra.
\newblock {\em Theor. Comput. Sci.}, 141(1{\&}2):69--107, 1995.

\bibitem{niwinski91}
Damian Niwi{\'n}ski.
\newblock On the cardinality of sets of infinite trees recognizable by finite
  automata.
\newblock In {\em Proc. MFCS}, 1991.

\bibitem{staiger}
Ludwig Staiger.
\newblock Rich omega-words and monadic second-order arithmetic.
\newblock In {\em Proc. of CSL}, pages 478--490, 1997.

\bibitem{staiger_comp}
Ludwig Staiger.
\newblock The {H}ausdorff measure of regular omega-languages is computable.
\newblock {\em Bulletin of the {EATCS}}, 66:178--182, 1998.

\bibitem{Tarski1951}
Alfred Tarski.
\newblock {\em A {D}ecision {M}ethod for {E}lementary {A}lgebra and
  {G}eometry}.
\newblock University of California Press, 1951.

\bibitem{thomas96}
Wolfgang Thomas.
\newblock Languages, automata, and logic.
\newblock In {\em Handbook of Formal Languages}, pages 389--455. Springer,
  1996.

\end{thebibliography}

%
%

%
%
\end{document}